\documentclass[12pt]{amsart}
\usepackage{epsfig}
\usepackage{amssymb}
\usepackage{latexsym}
\usepackage{comment,caption,subcaption}
\usepackage{graphics} 

\newtheorem{theorem}{Theorem}[section]
\newtheorem{remark}[theorem]{Remark}
\newtheorem{lemma}[theorem]{Lemma}
\newtheorem{proposition}[theorem]{Proposition}

\newtheorem{corollary}[theorem]{Corollary}
\numberwithin{equation}{section}

\title[Persistence for Nested Infection Networks]{How Nested Infection Networks in Host-Phage Communities Come To Be}

\author[Daniel A. Korytowski and Hal L. Smith]{}

\subjclass{Primary: 92D25, 92D40.}
 \keywords{bacteriophage, competitive exclusion principle,  ecological succession, nested infection network, permanence, persistence, predator-mediated coexistence }

\email{Daniel.Korytowski@asu.edu}
 \email{halsmith@asu.edu}

\thanks{Both authors work was supported by NSF grant DMS-0918440}

\begin{document}

\maketitle

\centerline{\scshape Daniel A. Korytowski and Hal L. Smith}
\medskip
{\footnotesize
 \centerline{ School of Mathematical and Statistical Sciences}
 \centerline{Arizona State University}
   \centerline{Tempe, AZ 85287, USA}
}

\begin{abstract}
We show that a chemostat community of bacteria and bacteriophage in which bacteria compete for a single
nutrient and for which the bipartite infection network
is perfectly nested is permanent, a.k.a. uniformly persistent, provided that bacteria that are superior competitors
for nutrient devote the least to defence against infection and the virus that are the most efficient at infecting
host have the smallest host range. This confirms earlier work of Jover et al \cite{Jover} who raised the issue
of whether nested infection networks are permanent. In addition, we provide sufficient conditions that a bacteria-phage community of arbitrary size
with nested infection network can arise through a succession of permanent subcommunties each with a nested infection network by the
successive addition of one new population.
\end{abstract}

\section{introduction}

This work is inspired by the recent paper of Jover, Cortez, and Weitz \cite{Jover}.
Noting that empirical studies strongly suggest that the bipartite infection networks observed
in bacteria and virus communities tend to have a nested structure characterized by a hierarchy
among both host and virus strains which constrains which virus may infect which host, they identify key tradeoffs between
competitive ability of the bacteria hosts and defence against infection and, on the part of virus, between virulence and transmissibility
versus host range  such that
a nested infection network can be maintained.  They find that ``bacterial growth rate should decrease
with increasing defence against infection'' and that ``the efficiency of viral infection should decrease
with host range''. Their mathematical analysis of a Lotka-Volterra  model incorporating the above mentioned tradeoffs strongly suggests that the perfectly nested community structure
of $n$-host bacteria and $n$-virus is permanent, sometimes also called persistent, or uniformly
persistent \cite{HS,ST,T}. Indeed,
they establish several necessary conditions for permanence: (1) a positive equilibrium for the system with all host and virus populations at positive density exists,
and (2) every boundary equilibrium of the $2n$-dimensional ordinary
differential equations, where one or more population from the nested structure is missing, is unstable to invasion by at least one of the missing populations.
They also note that while equilibrium dynamics are rare for such systems, invasability of boundary equilibria can imply invasability
of general boundary dynamics provided permanence holds according to results of Hofbauer and Sigmund \cite{HS}. However,
permanence of a perfectly nested infection network is not established in \cite{Jover}. The famous example of
three-species competition described by May and Leonard \cite{ML} shows that the necessary conditions mentioned above are not sufficient for permanence.

Permanence of bacteriophage and bacteria in a chemostat has been established for mathematical models of very simple communities consisting of a single virus and one or two host bacteria in \cite{ST1,HS}.

A nested infection network of three bacterial strains and three virus strains has the structure described in the infection table below. An `x' in the matrix means that the host below is infected by the virus on the left while a blank entry indicates no infection; for example, the second column of three x's indicates that bacteria $H_1$ is infected by virus $V_1,V_2$ and $V_3$. Host $H_1$ is the least resistant to infection while $H_3$ is the most resistant; virus $V_1$ specializes
on a single host while $V_3$ is a generalist, infecting all host.

\vspace{0.5cm}

\begin{tabular}{|c|c|c|c|}
  \hline
  $V_3$ & x & x & x \\\hline
  $V_2$ & x & x &  \\\hline
  $V_1$ & x &  &  \\\hline
        & $H_1$ & $H_2$ & $H_3$ \\
  \hline
\end{tabular}

\vspace{0.5cm}
This community may have evolved by the sequential addition of one new population following a mutational event or the selection of a rare variant. Below, going back in time, we list in order the communities from which the one above may have evolved from an ancestral community consisting of a single bacteria and a single virus on the right.

\vspace{0.5cm}

\hbox{

\begin{tabular}{|c|c|c|c|}
  \hline
    $V_2$ & x & x &  \\\hline
  $V_1$ & x &  &  \\\hline
        & $H_1$ & $H_2$ & $H_3$ \\
  \hline
\end{tabular}

\hspace{10pt}

\begin{tabular}{|c|c|c|}
  \hline
    $V_2$ & x & x  \\\hline
  $V_1$ & x &   \\\hline
        & $H_1$ & $H_2$  \\
  \hline
\end{tabular}

\hspace{10pt}

\begin{tabular}{|c|c|c|}
  \hline
      $V_1$ & x &   \\\hline
        & $H_1$ & $H_2$  \\
  \hline
\end{tabular}

\hspace{10pt}

\begin{tabular}{|c|c|}
  \hline
      $V_1$ & x   \\\hline
        & $H_1$   \\
  \hline
\end{tabular}
}

\vspace{0.5cm}

Other possible evolutionary trajectories starting from the ancestral pair at the bottom are highly unlikely. Obviously, a new virus cannot evolve without there being a susceptible host for it; however, a new bacterial strain resistent, or partially resistent, to some virus may evolve. Obviously, the three-host, three-virus
network need not be the end of the evolutionary sequence. A fourth bacterial strain may evolve resistance to all three virus.

Just such a sequence of mutational or selection events is observed in chemostat experiments starting from a single bacteria population and a
single virus population and leading to a nested infection network. Chao et al \cite{CLS} describe such a scenario in their experimental
observations of {\it E. Coli} and phage $T7$. A bacterial mutant resistant to the virus is observed to evolve first.
Resistance is conferred by a mutation affecting a receptor on the host surface to which the virus binds. Subsequently, a viral mutant evolves which is able to infect both
bacterial populations. Eventually, another bacterial mutant arises which is resistant to both virus. Similar evolutionary scenarios are noted in the review
of Bohannan and Lenski \cite{BL}. Thus, a nested infection structure can evolve
as an arms race between host and parasite.

Our goal in this paper is to show that a nested infection network consisting of $n$ bacterial host and $n$ lytic virus is permanent
given the trade-offs identified in \cite{Jover}. Recall that permanence means that there is a positive threshold, independent of positive initial
conditions of all populations, which every bacteria and virus  density ultimately exceeds.

However, we replace the Lotka-Volterra model used by Jover et al \cite{Jover} by a chemostat-based model where bacterial populations compete for nutrient
and virus populations compete for hosts as in \cite{CLS,HS,ST1,WHL}, although we ignore latency of virus infection. Aside from the additional realism of including competition for nutrient, our model avoids
the non-generic bacterial dynamics of the Lotka-Volterra model which possesses an $n-1$-dimensional simplex of virus-free equilibria.

Chemostat-based models of microbial competition for a single nutrient are known to induce a ranking of competitive ability
among the microbes determined by their break-even nutrient concentrations for growth, here denoted by $\lambda$ but often by $R^*$ in the ecological literature.
The competitive exclusion principle applies: a single microbial population, the one with smallest $\lambda$, drives all others to extinction \cite{Ti,SW} in the absence of virus.
In our model of a nested infection network, this host can be infected by every virus strain and as the $\lambda$ value of
host strains increases (i.e., it becomes less competitive for nutrient)
it is subject to infection by fewer virus strains. Virus populations are ranked by their efficiency at infecting host.
The most efficient strain specializes on the host with smallest $\lambda$ and as infection efficiency decreases host range increases
so that the virus strain of rank $k$ infects the $k$ most competitive host strains.

Our permanence result is a dramatic example of predator-mediated coexistence. In the absence of phage, only a single bacterial strain
can survive. However, the addition of an equal number of phage to our microbial community with infection efficiency versus host range tradeoff as noted above lead to the
coexistence of all populations.

In fact, we will show that the $n$-bacteria, $n$-virus community can arise through a succession of permanent sub-communities just as described in the infection tables above for the case $n=3$, starting with an ancestral community of one susceptible bacterial host and one virus. This is important because it ensures
that the intermediate communities are sufficiently stable so as to persist until a fortuitous mutational or colonization event allows further progression.
Permanence is not a guarantee of long term persistence since environmental stochasticity may intervene to cause an extinction event, especially when a population
is in a low part of its cycle. See Figure~\ref{fig1} below. However, our permanence result implies that should an extinction event occur, the resulting community
is likely to be a permanent one and therefore recovery is possible.

We also show that time averages of species densities are asymptotic to appropriate equilibrium levels. Solutions of our chemostat-based model are highly
oscillatory, apparently aperiodic, just as those observed for the Lotka-Volterra system of Jover et al \cite{Jover}. See Figure~\ref{fig1}.

Perhaps it is interesting to note that the mathematical justification used to establish our results is to exploit the evolutionary sequence noted in the infection tables above by way of the principle of mathematical induction, establishing permanence in a given sub-community in the successional sequence by appealing to the permanence hypothesis of its predecessor in the sequence.

The competitive exclusion principle is critical to our approach.
We will show that two virus strains cannot share the same set of bacterial hosts (i.e. cannot have the same host range) since one of the virus
will be more efficient at exploiting the host and drive the other to extinction.  Similarly, two bacterial strains cannot suffer infection by the same set of virus
because the weaker competitor for nutrient will eventually be excluded. Therefore, the competitive exclusion principle drives the evolution of communities towards a nested infection structure.

As noted in \cite{Jover}, perfectly nested infection networks are  generally only observed for very small host-virus communities.
Because natural host-virus communities have strong tendency to be approximately nested in their infection structure,  it is worth while to
consider how the idealized nested network may have evolved. Mathematical modeling is especially useful for exploring these idealized scenarios.
Furthermore, permanence, or persistence in mathematical models is known to be robust to model perturbations  under appropriate conditions \cite{S,GH,HSZ} and therefore it should continue to hold
for small deviations from a nested infection structure.

\section{A Chemostat-based Host-Virus Model}

The standard chemostat model of microbial competition for a single limiting nutrient \cite{SW} is modified by adding lytic virus. Our model
is a special case of general host-virus models formulated in \cite{CLS} which include viral latency.
Let $R$ denote the nutrient which supports the growth of bacteria strains $H_i$; it is supplied at concentration $R_0$ from the feed. $V_i$ denote the various virus strains that parasitize
the bacteria. Bacteria strain $H_i$ is characterized by its specific growth rate $f_i(R)$ and its yield $\gamma_i$.
For simplicity, we assume that the yield is the same for all bacterial strains: $\gamma_i=\gamma$ is independent of $i$. At this point, we assume only that the specific growth rates  $f_i$ are increasing functions of nutrient $R$, vanishing when $R=0$.
Following \cite{Jover}, we assume that virus strain $V_i$ is characterized by its adsorption rate $\phi_i$ and its burst size $\beta_i$, both of which are assumed to be independent of which host strain it infects. $D$ denotes the dilution rate of the chemostat.

The community of bacterial strains $H_1, H_2, \cdots, H_n$ and virus strains $V_1, V_2, \cdots, V_n$ is structured as follows. Virus strain $V_i$ parasitizes all host strains $H_j$ for $j\le i$.
Thus, strain $V_1$ specializes on host $H_1$ while strain $V_n$ is a generalist, infecting all host strains. As $i$ increases, virus strain $V_i$
becomes more generalist, less of a specialist; the index $i$ is indicative of the number of host strains $V_i$ infects. This structure is referred to as a
nested infection network in \cite{Jover}.

Our model is described by the following differential equations:
\begin{eqnarray}\label{eqns}
  R' &=& D(R_0-R)-\sum_i \frac{1}{\gamma}f_i(R)H_i \nonumber \\
  H_i' &=& H_i(f_i(R)-D)-H_i\sum_{j\ge i}\phi_jV_j \\
  V_i' &=& \beta_i\phi_iV_i \sum_{j\le i}H_j-DV_i,\ 1\le i\le n. \nonumber
\end{eqnarray}

Non-dimensional quantities are identified below:
$$
N=R/R_0, \ B_i=H_i/(\gamma R_0),\ DP_i=\phi_iV_i,\ \tau=Dt
$$
Again using prime for derivative with respect to $\tau$, we have the equations
\begin{eqnarray}\label{scaledeqns}
  N' &=& 1-N-\sum_i g_i(N)B_i \nonumber\\
  B_i' &=& B_i(g_i(N)-1)-B_i\sum_{j\ge i}P_j \\
  P_i' &=& s_i^{-1} P_i \left(\sum_{j\le i}B_j-s_i\right),\ 1\le i\le n.\nonumber
\end{eqnarray}
where
$$
s_i=\frac{D}{\beta_i\phi_i\gamma R_0},\ g_i(N)=f_i(R_0 N)/D.
$$
Now, each virus strain is characterized by a single parameter $s_i$ which reflects its burst size $\beta_i$ and its
adsorption rate $\phi_i$. Clearly, smaller $s_i$ translates to stronger ability to exploit the host.

Following \cite{Jover}, we  assume that a virus with larger host range (generalist) has weaker ability to exploit its hosts than a specialist virus with small host range:
\begin{equation}\label{s}
s_1<s_2<s_3<\cdots < s_n
\end{equation}

Assume that the specific growth rate $g_i$ is a strictly increasing function of nutrient concentration and that there exists the break-even nutrient concentration $\lambda_i<1$ for strain $B_i$ defined by the balance of growth and dilution: $g_i(\lambda_i)=1$.
We assume that the bacterial species are ordered such that
\begin{equation}\label{lambda}
0<\lambda_1<\lambda_2<\cdots <\lambda_n<1.
\end{equation}
This implies that in the absence of virus, $B_i$ dominates $B_j$ if $i<j$ but that each bacteria is viable in the absence of the others.
Indeed, classical chemostat theory \cite{SW,Ti} implies that $B_1$ would eliminate all $B_j,\ j>1$ in the absence of the virus.
In particular, the superiority rank of a bacterial strain is inversely related to the number of virus strains that infect it.
Strain $B_1$ is the best competitor in virus-free competition for nutrient but it can be infected by all the virus strains, while strain $B_n$ is the worst
competitor for nutrient but can be infected only by virus strain $P_n$.

System \eqref{scaledeqns} enjoys the usual chemostat conservation principle, namely that the total nutrient content of bacteria and virus plus free nutrient
$$
T=N+\sum_i B_i+\sum_i s_iP_i
$$
must come into balance with the input of nutrient:
$$
T'=1-T.
$$
On the exponentially attracting invariant set $T=1$ we can drop the equation for $N$ from \eqref{scaledeqns} and replace $N$ by $1-\sum_i B_i-\sum_i s_iP_i$.

As a final model simplification, linear specific growth rates  $g_i(N)=r_i N$ are used  where, by \eqref{lambda}, we must have
\begin{equation}\label{r}
1<r_n<r_{n-1}<\cdots <r_2<r_1.
\end{equation}
Then $\lambda_i=1/r_i$. The result is the system with Lotka-Volterra structure
\begin{eqnarray}\label{LV}
    B_i' &=& r_iB_i\left(1-\frac{1}{r_i}-\sum_j (B_i+s_iP_i)\right)-B_i\sum_{j\ge i}P_j \\
  P_i' &=& s_i^{-1} P_i \left(\sum_{j\le i}B_j-s_i\right)\nonumber
\end{eqnarray}
$U=\sum_i (B_i+s_iP_i)$  represents the nutrient value of the bacteria and virus. It satisfies
\begin{equation}\label{U}
U'=W-(1+W)U,\quad W=\sum_i r_iB_i
\end{equation}
We consider the dynamics of \eqref{LV} on the positively invariant set
\begin{equation}\label{omega}
\Omega=\{(B_1,\cdots,B_n,S_1,\cdots, S_n)\in \mathbb{R}^{2n}_+:\sum_i (B_i+s_iP_i)\le 1\}
\end{equation}

\section{Equilibria}

It is well-known that in the absence of virus, there are only  single-population bacterial equilibria for chemostat systems. See \cite{SW}.
Let $E_i=(1-\lambda_i)e_i$ denote the equilibrium where host strain $B_i$ is alone. Here, $e_i$ is the unit vector with all components zero except the $i$th which is one.
In the absence of virus, $E_1$ attracts all solutions with $B_1(0)>0$.

Next we consider equilibria where all or nearly all host and virus are present.

\begin{proposition}\label{positivequil}
There exists an equilibrium $E^*$ with $B_i$ and $P_i$ positive for all $i$ if and only if
\begin{equation}\label{ss}
\frac{r_n}{1+Q_n}>1
\end{equation}
where $Q_1=r_1s_1$ and
$$Q_n=s_1(r_1-r_2)+s_2(r_2-r_3)+\cdots + s_{n-1}(r_{n-1}-r_n)+s_nr_n,\ n>1.$$

In fact,
\begin{eqnarray}\label{equil}
B_1^*&=& s_1,\ B_j^*=s_j-s_{j-1},\ j>1,\\
P_j^*&=&\frac{r_j-r_{j+1}}{1+Q_n},\ j<n,\ P_n^*=\frac{r_n}{1+Q_n}-1.\nonumber
\end{eqnarray}
The positive equilibrium $E^*$ is unique and $\sum_i B_i=s_n$.
Summing by parts yields $Q_n=\sum_{i=1}^n r_iB_i^*$.

\eqref{ss} also implies the existence of a unique equilibrium $E^\dag$ with all components positive except for $P_n=0$.
In fact,
\begin{eqnarray}\label{equiln}
B_j^\dag&=&B_j^*,\ 1\le j<n,\nonumber \\
B_n^\dag&=&B_n^*+(1-\frac{1+Q_n}{r_n}),\\
P_j^\dag&=&P_j^* \left(\frac{1+Q_n}{r_n}\right),\ j<n,\ P_n^\dag=0.\nonumber
\end{eqnarray}

\end{proposition}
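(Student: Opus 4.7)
The plan is to exploit the triangular structure of the interior equilibrium conditions, which decouple cleanly: the $P_i$-nullclines determine the $B_i^\ast$, and the $B_i$-nullclines determine $(1-U)$ together with the $P_i^\ast$. To start, set $U=\sum_j(B_j+s_jP_j)$ and $A_i=\sum_{j\ge i}P_j$. Then an interior equilibrium of \eqref{LV} is equivalent, for $1\le i\le n$, to
\[
\sum_{j\le i} B_j = s_i, \qquad r_i(1-U)-1 = A_i.
\]
The $B$-system is lower-triangular: telescoping yields $B_1^\ast=s_1$ and $B_i^\ast=s_i-s_{i-1}$, all positive by \eqref{s}, with $\sum_i B_i^\ast = s_n$. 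Differencing the $A$-relations gives $P_j=(r_j-r_{j+1})(1-U)$ for $j<n$ and $P_n=r_n(1-U)-1$. Substituting these into $U=s_n+\sum_i s_iP_i$ and rearranging by a single summation-by-parts step collapses the right-hand side to $(1-U)Q_n$, so $1-U = 1/(1+Q_n)$. The formulas \eqref{equil} fall out, $P_j^\ast>0$ for $j<n$ follows from \eqref{r}, and $P_n^\ast>0$ is precisely the hypothesis \eqref{ss}. Because every step was a forced identification, $E^\ast$ is unique.

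For $E^\dag$, setting $P_n=0$ removes the $n$-th $P$-nullcline but leaves the remaining equations intact: the $P_i$-nullclines for $i<n$ again force $B_i^\dag=B_i^\ast$ for $i<n$, while the $B_n$-nullcline reduces to $r_n(1-U)=1$, pinning $U=1-1/r_n$ directly, without resort to the definition of $U$. The $B_i$-nullclines for $i<n$ then produce $P_j^\dag=(r_j-r_{j+1})/r_n = P_j^\ast(1+Q_n)/r_n$, and $B_n^\dag$ is recovered by solving the equation $U=\sum_i B_i^\dag+\sum_{i<n}s_iP_i^\dag$ for its one remaining unknown, giving the formula in \eqref{equiln}. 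The requirement $B_n^\dag>0$ reduces to $(1+Q_n)/r_n<1$, which is again \eqref{ss}.

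The only piece of bookkeeping that is not automatic is the summation-by-parts identity $\sum_{i=1}^{n-1}s_i(r_i-r_{i+1})+s_nr_n=\sum_i r_iB_i^\ast$, which simultaneously drives the reduction $U(1+Q_n)=Q_n$ above and verifies the closing remark $Q_n=\sum_i r_iB_i^\ast$. I do not anticipate any genuine obstacle; the argument is purely algebraic, and the three standing hypotheses \eqref{s}, \eqref{r}, \eqref{ss} are each invoked exactly once, to secure the positivity of the $B_i^\ast$, of the $P_j^\ast$ for $j<n$, and of $P_n^\ast$ (respectively $B_n^\dag$). The main care needed is only in consistently tracking the index shift between $A_i$ and $P_i$ so that the boundary term $P_n$ is handled separately from the differences $P_j=A_j-A_{j+1}$.
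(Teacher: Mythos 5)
Your computation is correct and is evidently the argument the authors intended (the paper omits the proof, but the remark about summing by parts shows they did exactly this): telescoping the $P$-nullclines gives the $B_i^*$, differencing the $B$-nullclines gives the $P_i^*$ in terms of $1-U$, and substituting into $U=s_n+\sum_i s_iP_i$ yields $1-U=1/(1+Q_n)$, with the $E^\dag$ case handled identically after replacing that last step by $r_n(1-U)=1$. The only nitpick is that $B_n^\dag>0$ does not literally ``reduce to'' \eqref{ss} --- it follows from \eqref{ss} together with $s_n>s_{n-1}$, since $B_n^\dag$ is the sum of the two positive terms $B_n^*$ and $1-(1+Q_n)/r_n$ --- but sufficiency is all the proposition claims.
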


\begin{remark}\label{remark}
\eqref{ss} is equivalent to
\begin{equation}\label{sss}
s_1(\frac{r_1-r_2}{r_n})+s_2(\frac{r_2-r_3}{r_n})+\cdots +s_{n-1}(\frac{r_{n-1}-r_n}{r_n})+s_n< 1-1/r_n,
\end{equation}
implying that $s_n<1$. To see that \eqref{r}, \eqref{s}, and \eqref{ss} can be satisfied simultaneously, note that
if the $r_i$ are chosen satisfying \eqref{r}, then one could choose $s_n$ such that $s_nr_1<r_n-1$. This implies that
\eqref{sss} holds with all $s_i=s_n$. In order to satisfy \eqref{s} it suffices to re-choose the $s_i,\ i<n$,
smaller so that \eqref{s} holds. Then \eqref{sss} will remain valid with the new $s_i$.
\end{remark}

\begin{remark}\label{Qn}
$Q_n=Q_{n-1} + r_nB_n^*$ which together with \eqref{r} implies that
$\frac{r_k}{1+Q_k}>\frac{r_n}{1+Q_n}$ for $1\le k<n$. Therefore, \eqref{ss} implies the existence of a
unique family of equilibria $E^*_k$ with $B_j,P_j=0, \ j>k$ described by \eqref{equil} but with
$Q_k$ replacing $Q_n$. Another family of equilibria, $E^\dag_k$, exists with $B_j=0, \ j>k$ and $P_j=0,\ j\ge k$  described by \eqref{equiln} but
with $Q_k$ replacing $Q_n$.
\end{remark}

\begin{remark}\label{equilibriumrelation}
Not surprisingly, the density of $B_i$ at the positive equilibrium $E^*$ is less then the density of $B_i$ at its equilibrium $E_i$.
More explicitly, $s_1<1-\frac{1}{r_1}$ and $s_i-s_{i-1}<1-\frac{1}{r_i},\ i>1$. This can be seen by rewriting \eqref{sss} as
$s_1r_1+r_2(s_2-s_1)+\cdots+r_n(s_n-s_{n-1})<r_n-1$
and using \eqref{r}. Note also that $P_j^\dag<P_j^*$, $B_j^\dag=B_j^*$ for $j<n$ and $B_n^\dag>B_n^*$.
\end{remark}

\begin{remark}\label{nutrient}
Free nutrient levels at $E^*$ and $E^\dag$ are revealing. At $E^\dag$, the (scaled) free nutrient level is given by $\lambda_n=1/r_n$, the same as
at $E_n$ where only bacteria strain $B_n$ is present with no virus. At $E^*$, the nutrient level is greater than at $E^\dag$. It is given by $\frac{1}{1+Q_n}$,
thus the ratio of the nutrient levels is precisely \eqref{ss}. Chao et al \cite{CLS} refer to $E^*$ as a ``phage-limited'' community while $E^\dag$ is referred to as
a ``nutrient-limited" one when $k=1$.
\end{remark}

\begin{remark}\label{unstable}
\eqref{ss} implies that $E^\dag$ is unstable to invasion by $P_n$ since
$$
\frac{s_nP_n'}{P_n}|_{E^\dag}=(1-\frac{1+Q_n}{r_n})>0.
$$
The additional nutrient level available at $E^\dag$ facilitates the invasion of the virus $P_n$.
\end{remark}

There are other equilibria. A complete list of them is given below. However, we will not have need of these details.

\begin{lemma}\label{equilibria}
Let $E$ be an equilibrium with at least one $P_i>0$. Then there exists some $k$ with $1\le k\le n$ such that
$E$ has exactly $k$ nonzero virus components and either  $k$ or $k+1$ nonzero bacteria components. Moreover, if we denote by
$I=\{i_1,i_2,\cdots, i_k\}$ the ordered indices with $P_i>0\Leftrightarrow i\in I$, then there exist a set $J=\{j_1,j_2,\cdots, j_k\}$
uniquely determined by $B_j>0,\ j\in J$ and by
\begin{equation}\label{order}
  j_1\le i_1<j_2\le i_2<j_3\le i_3 \cdots \le i_{k-1}<j_k\le i_k.
\end{equation}
If there are $k+1$ positive bacterial components, then $i_k<n$ and there exists $j_{k+1}>i_k$ such that $B_{j_{k+1}}>0$.

Moreover, if \eqref{ss} holds, for every such $k$ and any such ordered set $I=\{i_1,i_2,\cdots, i_k\}$ and any corresponding set $J=\{j_1,j_2,\cdots, j_k\}$
as in \eqref{order}, there exists a unique equilibrium $E$ where $P_i>0\Leftrightarrow i\in I$ and $B_j>0\Leftrightarrow j\in J$ having exactly $k$ nonzero
virus and $k$ nonzero bacteria.

The only equilibria without any virus present are the $E_i\equiv (1-1/r_i)e_i,\ 1\le i\le n$ with only $B_i>0$.
\end{lemma}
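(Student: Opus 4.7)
The plan is to split the analysis into three pieces: extract combinatorial constraints on the support of an equilibrium from the two scalar relations buried in \eqref{LV}; show by a triangular solve that each admissible support pattern produces a unique equilibrium; and handle the virus-free case separately. At any equilibrium one has $\sum_{j\le i} B_j = s_i$ whenever $P_i>0$ (call this $(a)_i$) and $r_j(1-U) = 1+\sum_{\ell\ge j}P_\ell$ whenever $B_j>0$ (call this $(b)_j$), where $U=\sum_\ell(B_\ell+s_\ell P_\ell)$. A preliminary observation is that $U<1$ at any equilibrium with at least one $B_j>0$, since otherwise $(b)_j$ would force a nonpositive right-hand side.

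Write the virus support as $I=\{i_1<\cdots<i_k\}$, and set $i_0:=0$, $s_{i_0}:=0$. Differencing consecutive $(a)_{i_\ell}$ yields $\sum_{i_{\ell-1}<j\le i_\ell} B_j = s_{i_\ell}-s_{i_{\ell-1}}>0$ by \eqref{s}, so each interval $(i_{\ell-1},i_\ell]$ must contain at least one $j\in J$, giving $|J|\ge k$ and the interlacing $j_1\le i_1$, $i_{\ell-1}<j_\ell\le i_\ell$. Conversely, if $j<j'$ are consecutive in $J$, subtracting $(b)_j$ and $(b)_{j'}$ gives $(r_j-r_{j'})(1-U)=\sum_{j\le\ell<j'}P_\ell$, and since $1-U>0$ and $r_j>r_{j'}$ by \eqref{r}, some $\ell\in I$ must lie in $[j,j')$; hence $|J|\le|I|+1$, and equality forces $j_{k+1}>i_k$, in particular $i_k<n$. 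For existence and uniqueness, fix $(I,J)$ satisfying \eqref{order}. The system $\{(a)_{i_\ell}\}$ is lower triangular in the $B_{j_\ell}$ by the interlacing and is solved uniquely by $B_{j_1}=s_{i_1}$ and $B_{j_\ell}=s_{i_\ell}-s_{i_{\ell-1}}$ for $\ell\ge 2$, all positive by \eqref{s}. Differencing consecutive $(b)_{j_\ell}$ collapses the telescoping $P$-sum to a single term, yielding $P_{i_\ell}=(r_{j_\ell}-r_{j_{\ell+1}})(1-U)$ for $\ell<k$ and $P_{i_k}=r_{j_k}(1-U)-1$. If $|J|=k+1$, then $(b)_{j_{k+1}}$ has empty $P$-sum and fixes $1-U=1/r_{j_{k+1}}$; the $P_{i_\ell}$ are determined explicitly, and $B_{j_{k+1}}$ is recovered from the definition of $U$. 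If $|J|=k$, substituting the formulas for $P_{i_\ell}(U)$ into $U=s_{i_k}+\sum_\ell s_{i_\ell}P_{i_\ell}$ yields a single scalar equation $U=(1-U)Q_{I,J}$ with $Q_{I,J}:=\sum_\ell r_{j_\ell}(s_{i_\ell}-s_{i_{\ell-1}})$, whose unique solution $U=Q_{I,J}/(1+Q_{I,J})$ mirrors the derivation of $E^*$ in Proposition~\ref{positivequil}.

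The main obstacle is verifying positivity of $P_{i_k}$ in the $|J|=k$ case and of $B_{j_{k+1}}$ in the $|J|=k+1$ case; all other positivities are automatic from \eqref{s} and \eqref{r}. Each reduces to a Jover-type inequality $r_{j_k}>1+Q_{I,J}$, that is, the analogue of \eqref{ss} for the sub-index set $(I,J)$. In the contiguous case $I=J=\{1,\ldots,k\}$ this is exactly the content of Remark~\ref{Qn}; the general case calls for a comparison argument exploiting the interlacing \eqref{order} together with \eqref{r} and \eqref{s}. Finally, for equilibria without any virus, $(b)_j$ degenerates to $U=1-1/r_j$, which by \eqref{r} can hold for at most one index, yielding the unique family $E_j=(1-1/r_j)e_j$ of virus-free equilibria.
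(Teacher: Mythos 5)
Your combinatorial analysis is sound and runs parallel to the paper's own argument: the paper likewise differences the relations $\sum_{j\le i_\ell}B_j=s_{i_\ell}$ to place at least one positive $B_j$ in each interval $(i_{\ell-1},i_\ell]$, and uses the fact that $r_j(1-U)-1=\sum_{m\ge j}P_m$ is a step function constant on those intervals (while $j\mapsto r_j(1-U)-1$ is strictly decreasing) to cap the count at one per interval plus possibly one index beyond $i_k$; your differencing of consecutive $(b)_j$'s is the same idea in a different dress. Your explicit triangular solve for $B_{j_\ell}$, $P_{i_\ell}$ and the scalar equation $1-U=1/(1+Q_{I,J})$ also reproduces what the paper gets by renumbering the $(I,J)$-subsystem and invoking Proposition~\ref{positivequil} with $\tilde r_\ell=r_{j_\ell}$, $\tilde s_\ell=s_{i_\ell}$. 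The virus-free case is handled identically.

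The genuine gap is the step you explicitly defer: the positivity $P_{i_k}=r_{j_k}/(1+Q_{I,J})-1>0$, i.e.\ the sub-system analogue of \eqref{ss}. You assert that ``a comparison argument exploiting the interlacing together with \eqref{r} and \eqref{s}'' will deliver it; the paper's own (unpublished) sketch is equally terse, claiming the subsystem inherits \eqref{ss} ``due to the monotonicity of $r_i$.'' Neither claim can be made good in the stated generality, because the inequality $r_{j_k}>1+Q_{I,J}$ does \emph{not} follow from \eqref{r}, \eqref{s}, \eqref{ss}. Take $n=3$, $r=(10,\,3,\,2.9)$, $s=(0.01,\,0.2,\,0.21)$: then $Q_3=0.07+0.02+0.609=0.699$ and $r_3/(1+Q_3)\approx 1.71>1$, so \eqref{ss} holds; but for the admissible pattern $I=\{2,3\}$, $J=\{1,3\}$ (which satisfies \eqref{order}) one gets $Q_{I,J}=r_1s_2+r_3(s_3-s_2)=2.029$ and hence $P_3=r_3/(1+Q_{I,J})-1<0$, so no equilibrium with that support exists. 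The difficulty is exactly the one your Abel-summation form exposes: replacing $r_{i_\ell}$ by the larger $r_{j_\ell}$ inflates $Q_{I,J}$ by $(r_{j_\ell}-r_{i_\ell})$-terms weighted by $s_{i_\ell}$, while the target $r_{j_k}$ need not grow at all (here $j_k=i_k$). Your argument does go through when $J$ tracks $I$ closely (e.g.\ $J=I$, where $r_{i_\ell}\le r_m$ for $m\le i_\ell$ gives $Q_{I,I}\le Q_{i_k}$ and Remark~\ref{Qn} finishes it), but not for every $(I,J)$ obeying \eqref{order}. So either additional hypotheses are needed for the blanket existence claim or the claim must be restricted; note the paper itself remarks that these details are not needed elsewhere, and none of the permanence results depend on this part of the lemma.
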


Figure~\ref{fig1} provides illuminating simulations of \eqref{LV} for the case $n=3$.
Parameter values are $r_1=3.2, r_2=3.1, r_3=3.0; s_1=0.1, s_2=0.15, s_3=0.2$.
For the top row, all initial population densities are given by
$B_i(0)=P_i(0)=0.1$. Observe that free nutrient level is high in this case because $P_3$, the dominant
virus, keeps  $B_3$ at low density. The bacterial community is ''phage limited'' in this case.
In the second row, initial data are $B_i(0)=0.1, 1\le i\le 3$, $P_j(0)=0.1, j=1,2$ and $P_3(0)=0$. Observe that
free nutrient levels are much lower than for the top row because $B_3$ is free to consume it.
The bacterial community is ''nutrient limited'' in this case.
In the third row has initial data are $B_i(0)=P_i(0)=0.1, 1\le i\le 2$  and $B_3(0)=P_3(0)=0$.

\begin{figure}
        \centering
        \begin{subfigure}[b]{0.3\textwidth}
                \includegraphics[width=\textwidth]{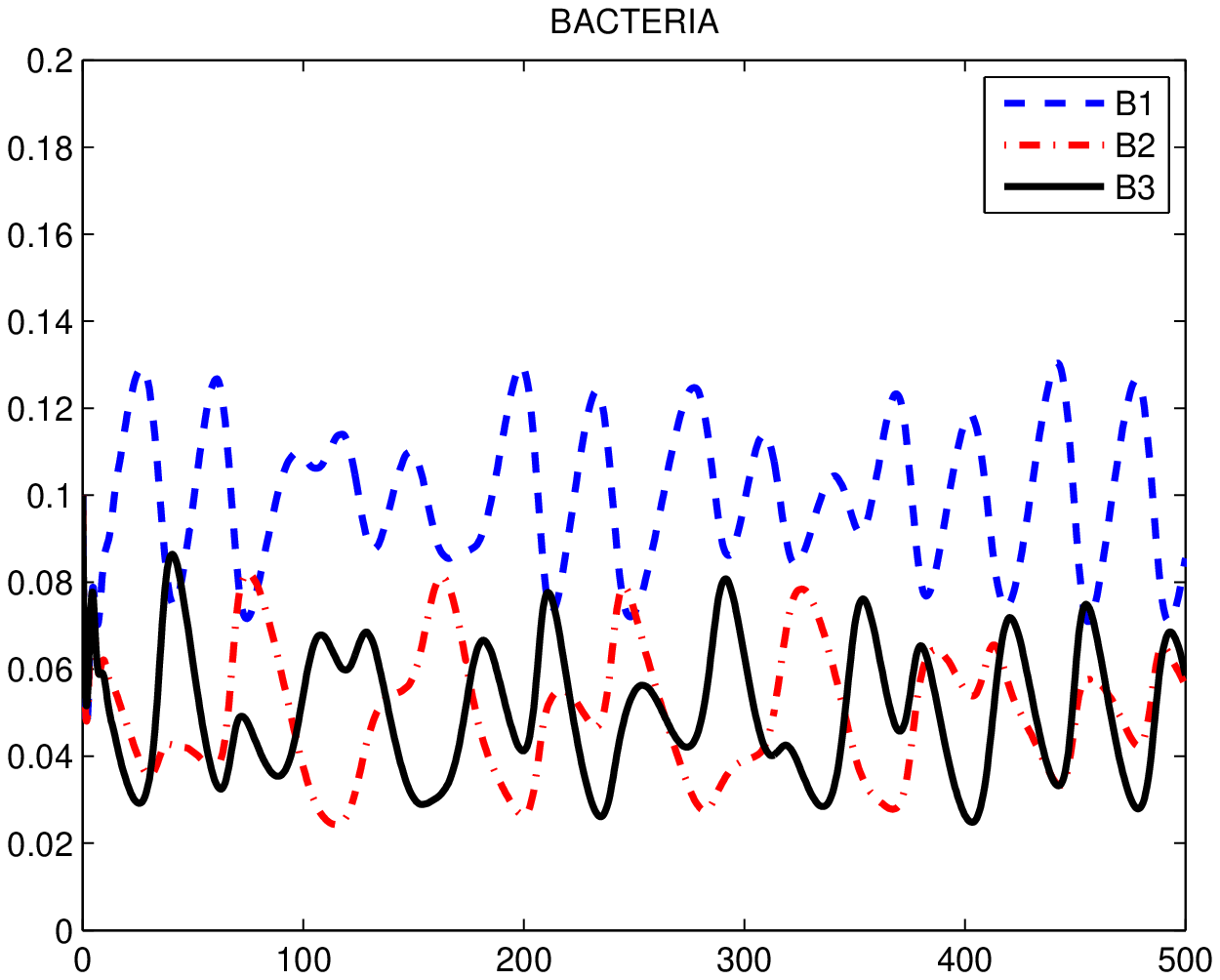}
        \end{subfigure}%
        ~ 
        \begin{subfigure}[b]{0.3\textwidth}
                \includegraphics[width=\textwidth]{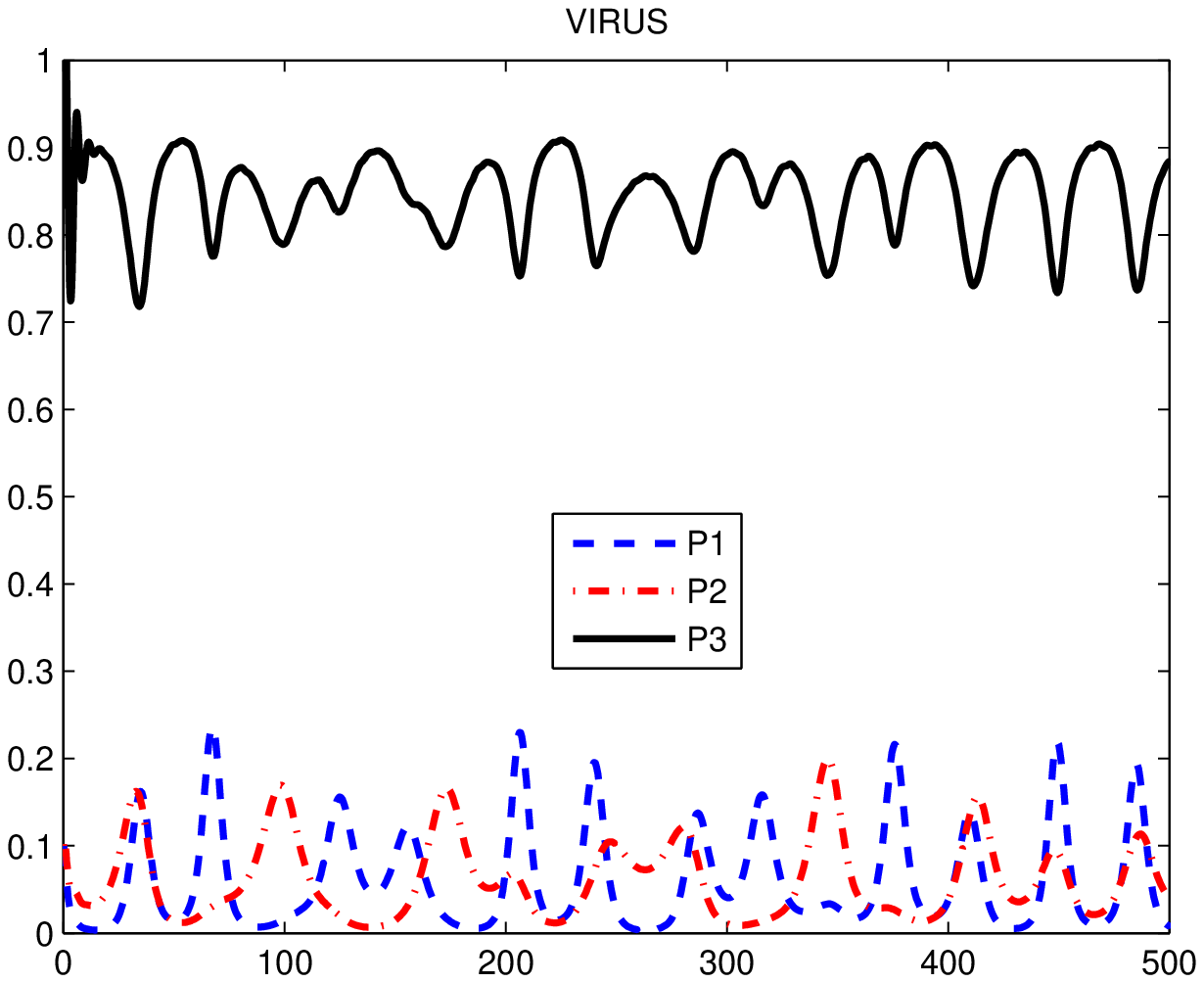}
        \end{subfigure}
        ~ 
        \begin{subfigure}[b]{0.3\textwidth}
                \includegraphics[width=\textwidth]{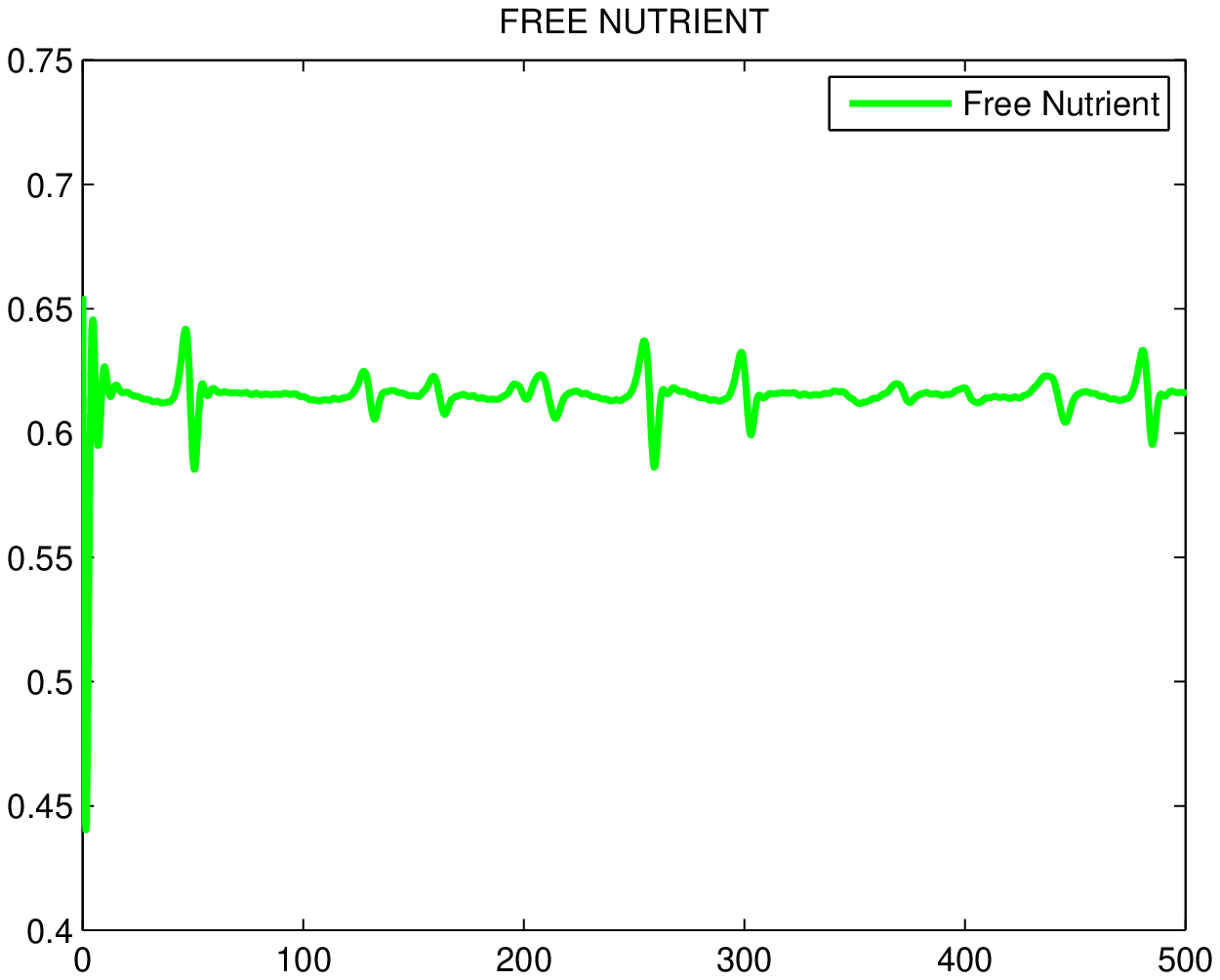}
        \end{subfigure}

        \begin{subfigure}[b]{0.3\textwidth}
                \includegraphics[width=\textwidth]{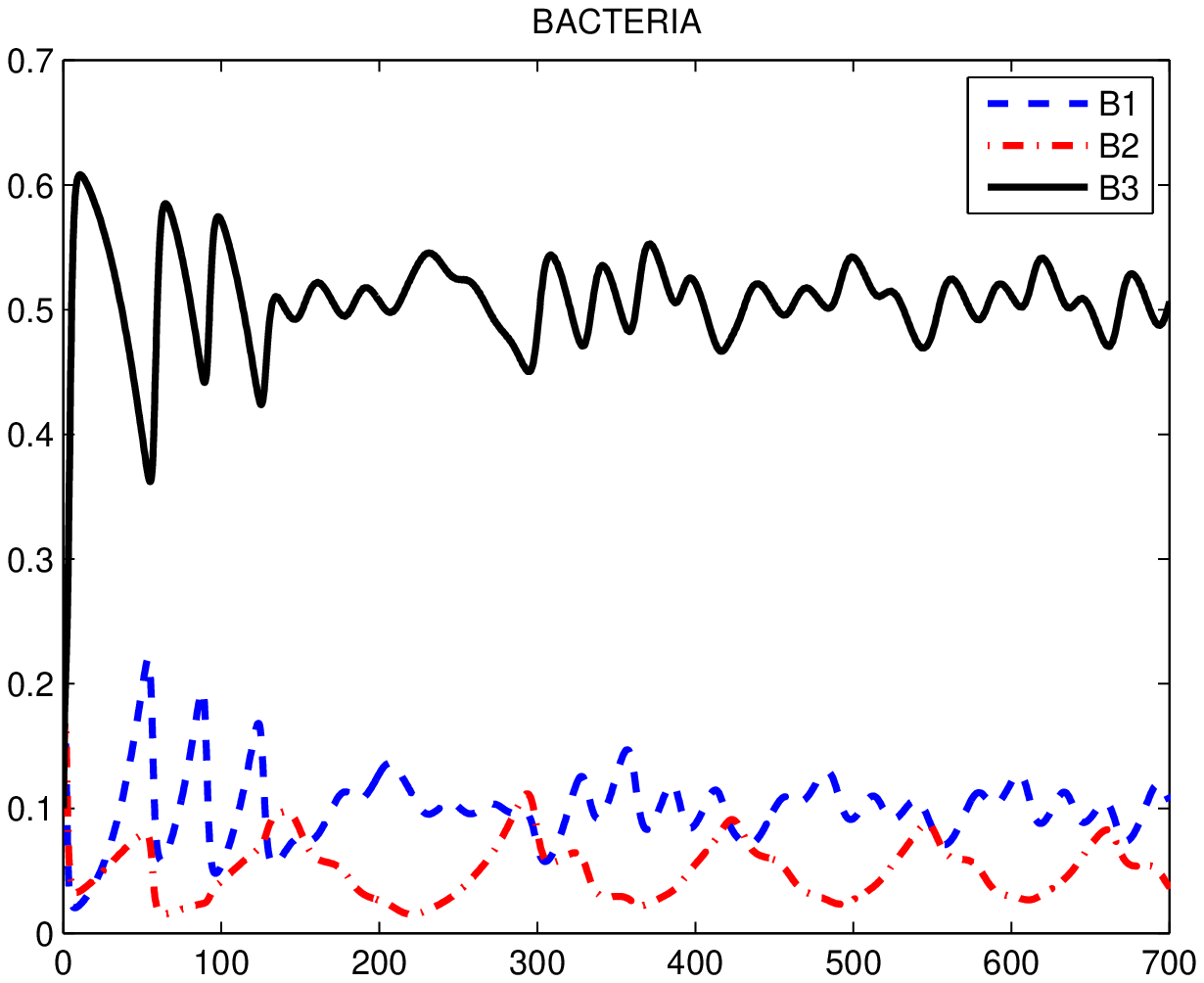}
        \end{subfigure}%
        ~ 
        \begin{subfigure}[b]{0.3\textwidth}
                \includegraphics[width=\textwidth]{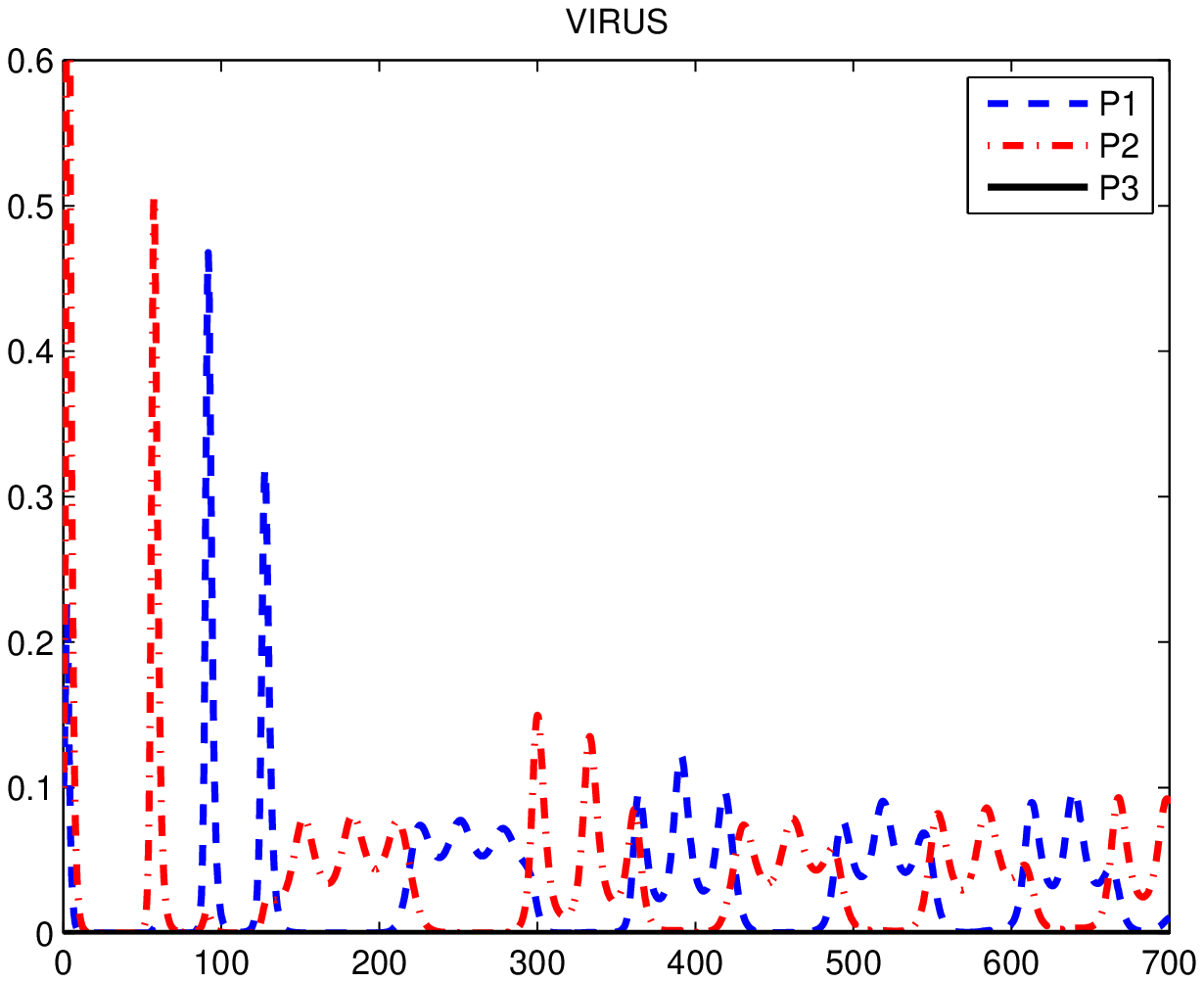}
        \end{subfigure}
        ~ 
        \begin{subfigure}[b]{0.3\textwidth}
                \includegraphics[width=\textwidth]{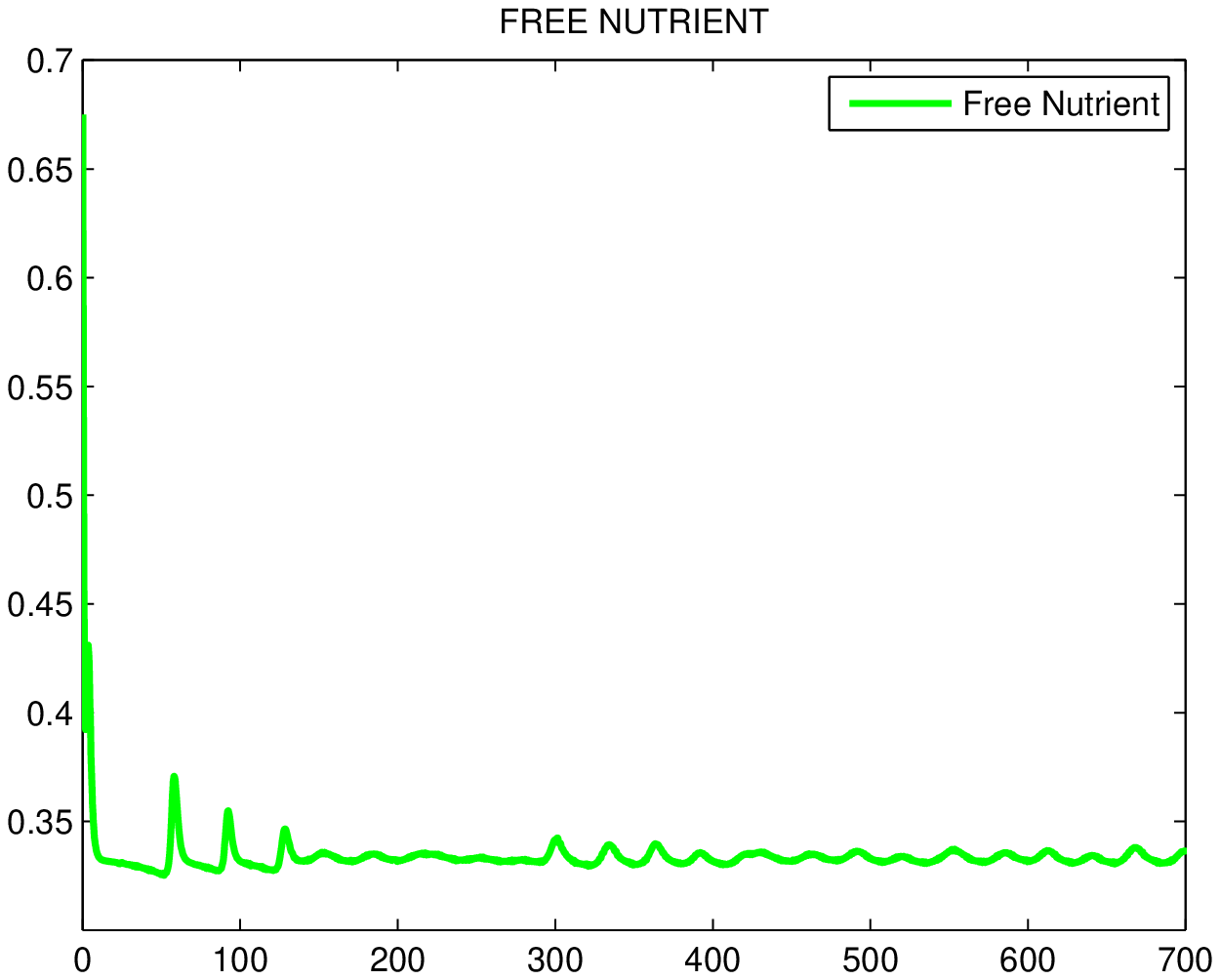}
        \end{subfigure}

        \begin{subfigure}[b]{0.3\textwidth}
                \includegraphics[width=\textwidth]{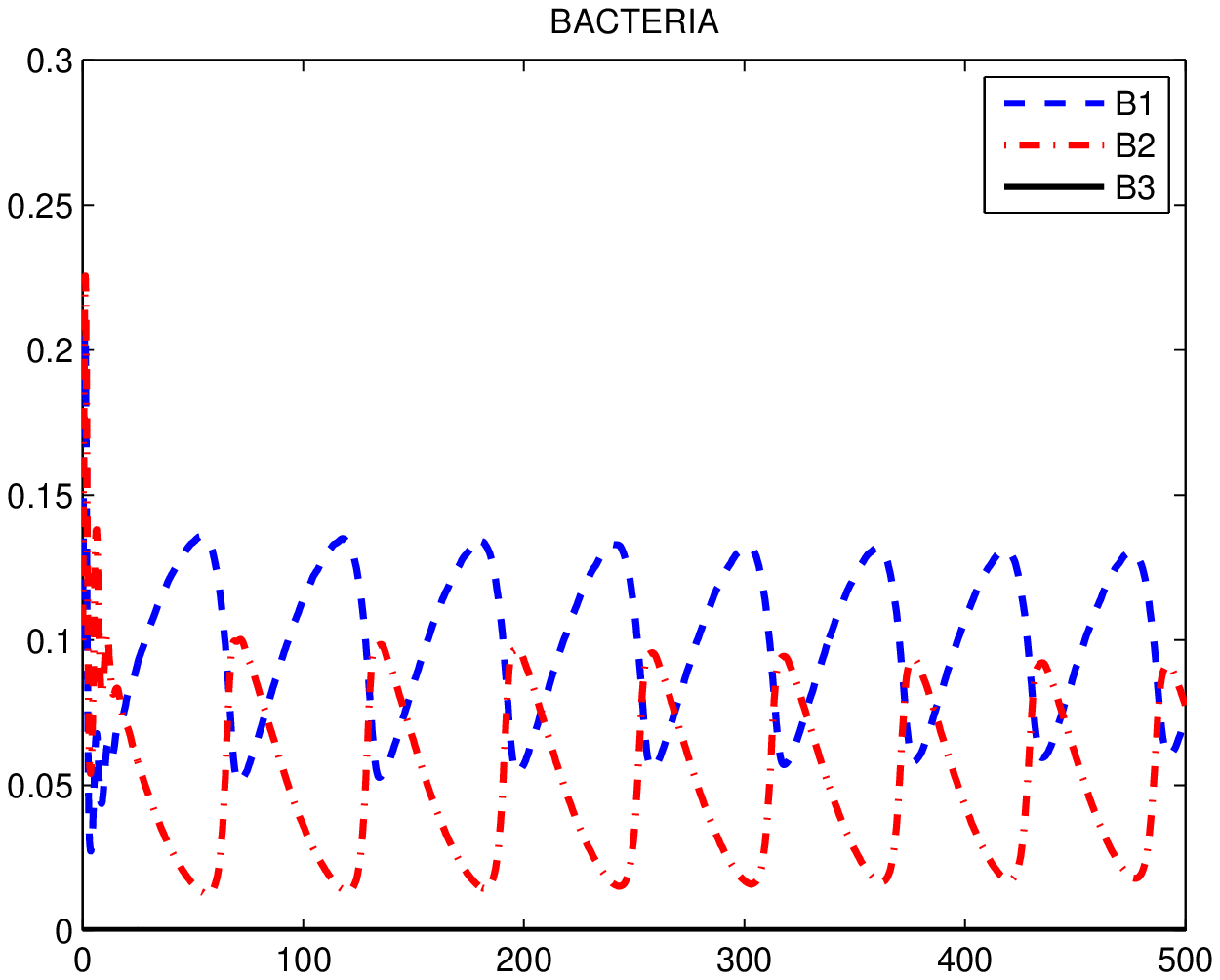}
        \end{subfigure}%
        ~ 
        \begin{subfigure}[b]{0.3\textwidth}
                \includegraphics[width=\textwidth]{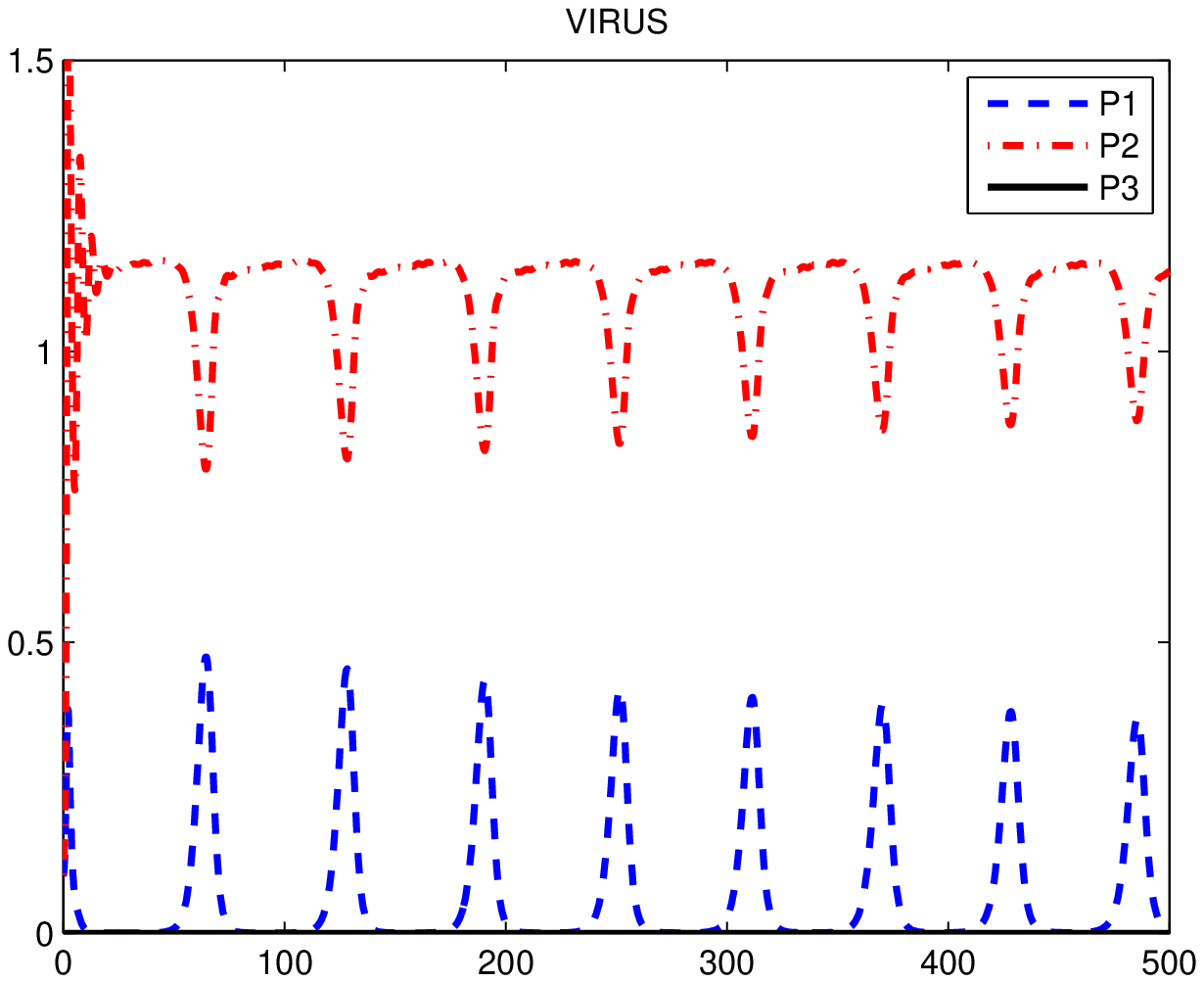}
        \end{subfigure}
        ~ 
        \begin{subfigure}[b]{0.3\textwidth}
                \includegraphics[width=\textwidth]{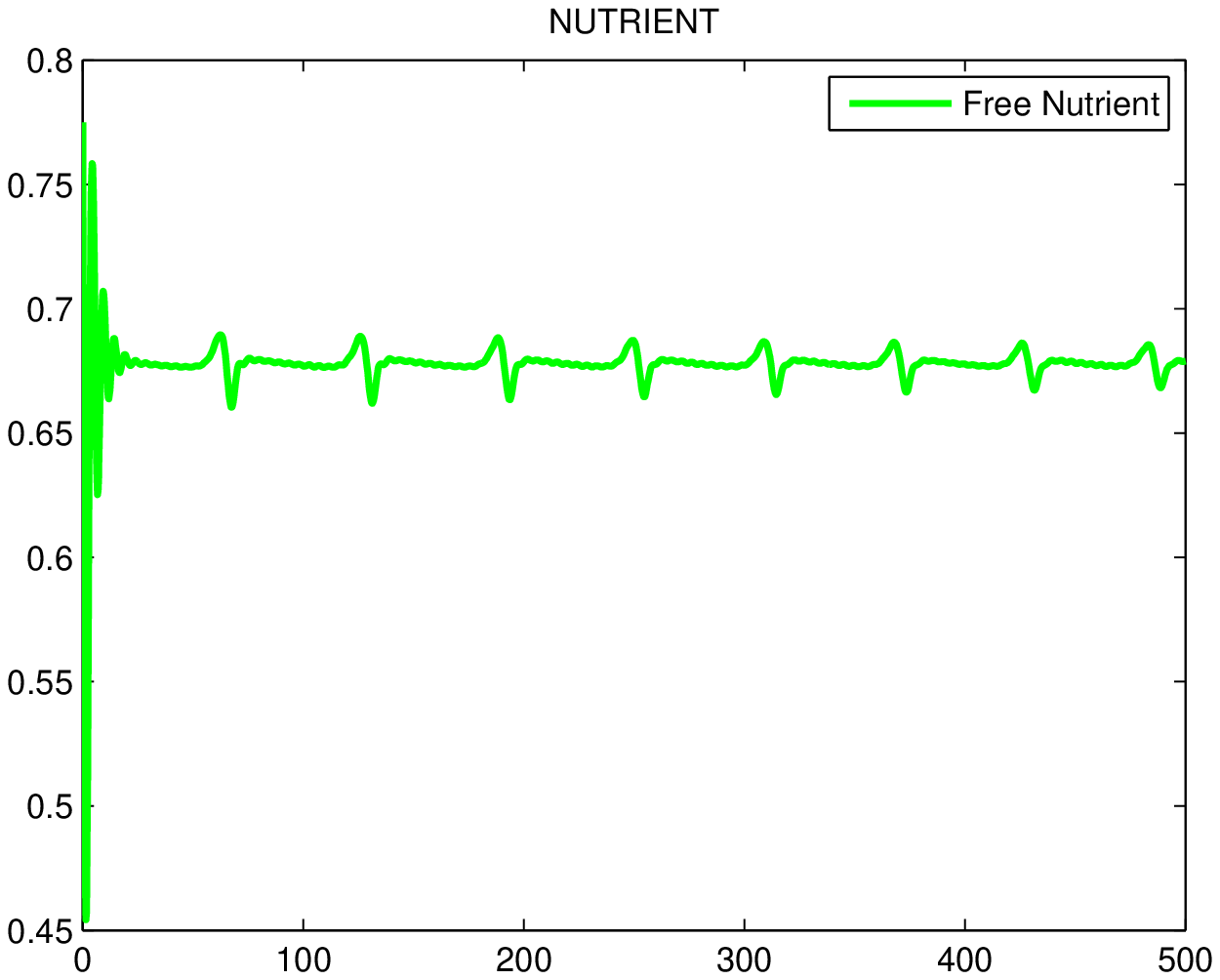}
        \end{subfigure}
        \vspace{2cm}
        \caption{Top row: $B_i(0)=P_i(0)=0.1, 1\le i\le 3$;
        second row: same except $P_3\equiv 0$; bottom row, as above except $B_3=P_3\equiv 0$. }\label{fig1}
\end{figure}

\section{Permanence}

In this section we state and prove our main results, Theorem~\ref{persist} and Corollary~\ref{meanvalue}. We begin by
establishing a competitive exclusion principle in the context of our model.

Two virus strains cannot share the same set of host bacterial strains; the weaker virus strain, the one with largest index, is doomed to extinction.
This is due to our assumption that each virus strain does not distinguish among the host that it infects in terms of adsorption rate or burst size.
Similarly, two bacteria strains
cannot share the same set of infecting virus strains; the bacterial strain which is the least competitive for nutrient, the one with largest index, is doomed to extinction. The next result formalizes these conclusions.

\begin{lemma}[Competitive Exclusion Principle]\label{extinction}
Let $1\le i<j\le n$.

If $P_i(0)>0, P_j(0)>0$ and $B_k(0)=0,\ i<k\le j$, then $P_j(t)\to 0$ as $t\to\infty$.

If $B_i(0)>0, B_j(0)>0$ and $P_k(0)=0,\ i\le k<j$, then $B_j(t)\to 0$ as $t\to\infty$.
\end{lemma}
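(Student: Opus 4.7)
The plan is to prove each assertion by forming a cleverly weighted ratio of the two competing populations whose logarithmic derivative collapses to a strictly negative constant (or a strictly negative function). The key structural observation in both cases is that the side-hypothesis---absent bacteria with index strictly between $i$ and $j$ in the first part, absent virus in the second---forces the two competitors to see \emph{identical} versions of the only time-varying driving term in their log-derivatives. Once this is noticed, one picks the exponent that kills this shared term in the log-derivative, leaving a dissipative constant.

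For the first claim, I would first note that since $B_k'=B_k(\cdots)$ is multiplicative, $B_k(0)=0$ for $i<k\le j$ forces $B_k(t)\equiv 0$. Hence
\[
A(t):=\sum_{m\le i} B_m(t) = \sum_{m\le j} B_m(t),
\]
and $A$ appears identically in the log-derivatives of $P_i$ and $P_j$. Set $W:=P_j\,P_i^{-s_i/s_j}$, which is well defined because $P_i(0),P_j(0)>0$ keeps both populations strictly positive for all time. A direct computation gives
\[
\frac{W'}{W}=\left(\frac{A}{s_j}-1\right)-\frac{s_i}{s_j}\left(\frac{A}{s_i}-1\right)=\frac{s_i-s_j}{s_j}<0,
\]
so $W(t)$ decays exponentially to $0$. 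Since all trajectories remain in the bounded set $\Omega$ of \eqref{omega}, $P_i$ is bounded above, hence so is $P_i^{s_i/s_j}$, and therefore $P_j=W\,P_i^{s_i/s_j}\to 0$.

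For the second claim I would argue symmetrically. The hypothesis $P_k(t)\equiv 0$ for $i\le k<j$ forces $\sum_{m\ge i}P_m(t)=\sum_{m\ge j}P_m(t)=:S(t)$, so the log-derivatives $(\log B_i)'=r_i(1-U)-1-S$ and $(\log B_j)'=r_j(1-U)-1-S$ differ only in the coefficient of $(1-U)$. Choosing $V:=B_j\,B_i^{-r_j/r_i}$ annihilates the $(1-U)$-term and yields
\[
\frac{V'}{V}=\frac{r_j-r_i}{r_i}\bigl(1+S(t)\bigr)\le \frac{r_j-r_i}{r_i}<0,
\]
using $r_j<r_i$ from \eqref{r} and $S\ge 0$. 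Hence $V\to 0$ exponentially, and boundedness of $B_i$ in $\Omega$ gives $B_j=V\,B_i^{r_j/r_i}\to 0$.

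The only real content of the argument is spotting the exponent that annihilates the shared driving term; that exponent is forced by the requirement of cancellation ($s_i/s_j$ in the first case, $r_j/r_i$ in the second), after which the rest is a one-line exponential estimate combined with the a priori boundedness from $\Omega$. No Lyapunov function, invariant-manifold reduction, or persistence machinery is required---fortunate, since this competitive-exclusion lemma is meant to feed into the inductive permanence argument of Theorem~\ref{persist}.
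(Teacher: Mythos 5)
Your proof is correct. For the first assertion it is essentially the paper's argument made explicit: the paper invokes Lemma 3.1 of \cite{TS} with $\kappa_1=1/s_i$, $\kappa_2=1/s_j$, $h_1=h_2\equiv -1$, which amounts to exactly your computation that $\frac{d}{dt}\log\bigl(P_j^{s_j}/P_i^{s_i}\bigr)=s_i-s_j<0$, followed by the same appeal to boundedness of $P_i$ on $\Omega$. For the second assertion you take a genuinely different (and slightly cleaner) weighting. The paper uses equal weights, so the shared predation term $S=\sum_{m\ge i}P_m=\sum_{m\ge j}P_m$ cancels and one is left with $\frac{d}{dt}\log(B_j/B_i)=(r_j-r_i)(1-U)$; to conclude that this integrates to $-\infty$ the paper must then invoke Lemma~\ref{Ubound} to guarantee $\limsup U<1$. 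You instead choose the exponent $r_j/r_i$ so as to cancel the nutrient term $r(1-U)$, leaving $\frac{V'}{V}=\frac{r_j-r_i}{r_i}(1+S)\le\frac{r_j-r_i}{r_i}<0$, which is negative outright since $S\ge 0$; this makes the second part self-contained, needing only positive invariance of $\Omega$ (so that $B_i$ is bounded) and not the asymptotic estimate on $U$. Both routes are valid; the paper's has the virtue of treating the two assertions by one quotable lemma, yours has the virtue of requiring less machinery in the bacterial case.
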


\begin{proof}
The first assertion follows from Lemma 3.1 \cite{TS}, applied to the equations for $P_i$ and $P_j$, where $\kappa_1=1/s_i,\ \kappa_2=1/s_j,\ h_1(t)=h_2(t)\equiv -1$, and
$g(t)=\sum_{k\le i}B_k(t)=\sum_{k\le j}B_k(t)$. Note that
$$
s_j\int_0^t h_2(s)ds-s_i\int_0^t h_1(s)ds=(s_i-s_j)t\to-\infty.
$$
Hence the result follows from the quoted result since $P_i(t)$ is bounded.

The second assertion also follows from Lemma 3.1, applied to the equations for $B_i$ and $B_j$, where $\kappa_1=\kappa_2=1$,
$g(t)=\sum_{k\ge i}P_k(t)=\sum_{k\ge j}P_k(t)$, and $h_1=r_i-1-r_i U,\ h_2=r_j-1-r_j U$
where $U=\sum_k (B_k+s_kP_k)$. Note that
$$
\int_0^t h_2(s)-h_1(s) ds=(r_j-r_i)\int_0^t (1-U(s))ds\to -\infty,
$$
where the concluded limit is due to Lemma~\ref{Ubound} and \eqref{r}.
\end{proof}

\begin{remark}
Lemma~\ref{extinction} strongly constrains the evolution of bacteria and virus communities, at least under our assumption
that virus do not distinguish among their host in terms of adsorption rate and burst size. For example if a community consisting of a single
virus strain and a single bacterial strain is invaded by a new virus strain then either the resident virus strain or the invader must
be driven to extinction. However, our community can be successfully invaded by a new bacterial strain which is resistent to the virus
but an inferior competitor for nutrient than the resident.
\end{remark}

Hereafter, we assume without further mention that \eqref{ss} holds.

If $h:(a,\infty)\to \mathbb{R}$, we write $h_\infty=\liminf_{t\to\infty}h(t)$ and $h^\infty$ with limit superior in place
of limit inferior.

\begin{lemma}\label{Ubound}
Every solution of \eqref{LV} starting in $\Omega$ satisfies
\begin{equation}\label{Uinfty}
    \limsup_{t\to\infty} U(t)\le \frac{W^\infty}{1+W^\infty}\le \frac{r_1}{1+r_1}
\end{equation}
where $U=\sum_i (B_i+s_iP_i)$ and $W=\sum_i r_iB_i$ are defined by \eqref{U}.
\end{lemma}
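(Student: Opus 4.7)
The plan is to exploit the linear ODE~\eqref{U} for $U$, namely $U' = W - (1+W)U = W(1-U) - U$, together with the fact that on $\Omega$ we have $0 \le U \le 1$, so that the right-hand side is monotone increasing in $W$. Coupled with an $\epsilon$-approximation of $W$ by its limsup, a straightforward differential inequality comparison will yield the result.

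First I would record the a priori bounds that come for free from the definition of $\Omega$: since $\sum_i B_i + \sum_i s_i P_i \le 1$, we have $0 \le U(t) \le 1$ for all $t \ge 0$, and $W(t) = \sum_i r_i B_i(t) \le r_1 \sum_i B_i(t) \le r_1$ for all $t \ge 0$. In particular $W^\infty \le r_1$, which, together with the fact that $x \mapsto x/(1+x)$ is strictly increasing on $[0,\infty)$, will immediately give the second inequality $W^\infty/(1+W^\infty) \le r_1/(1+r_1)$ once the first is established.

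For the main inequality, fix $\epsilon > 0$ and choose $T = T(\epsilon)$ such that $W(t) \le W^\infty + \epsilon$ for all $t \ge T$. Since $1 - U(t) \ge 0$, the right-hand side of
\begin{equation*}
U' = W(1-U) - U
\end{equation*}
is nondecreasing in $W$, so for $t \ge T$,
\begin{equation*}
U'(t) \le (W^\infty + \epsilon)(1 - U(t)) - U(t) = (W^\infty+\epsilon) - \bigl(1 + W^\infty + \epsilon\bigr)U(t).
\end{equation*}
A standard differential inequality comparison with the linear autonomous equation $u' = c - (1+c)u$, where $c = W^\infty + \epsilon$ (whose unique equilibrium $c/(1+c)$ is globally attracting), gives
\begin{equation*}
\limsup_{t\to\infty} U(t) \le \frac{W^\infty + \epsilon}{1 + W^\infty + \epsilon}.
\end{equation*}
Letting $\epsilon \downarrow 0$ and using the continuity of $x \mapsto x/(1+x)$ yields $\limsup_{t\to\infty} U(t) \le W^\infty/(1+W^\infty)$, completing the proof.

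There is no real obstacle here; the only subtlety is justifying the monotonicity step $U' \le c - (1+c)U$, which rests on the sign condition $1-U \ge 0$, i.e.\ on the positive invariance of $\Omega$. That invariance is itself a consequence of the conservation identity $T' = 1 - T$ noted just above~\eqref{omega}, which I would invoke briefly before beginning the comparison argument.
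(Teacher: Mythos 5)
Your proof is correct. It differs from the paper's in the tool used for the final limit step: the paper simply invokes the fluctuation lemma (Prop.~A.14 of Smith--Thieme \cite{ST}) applied to the scalar equation $U'=W-(1+W)U$ --- one chooses times $t_k\to\infty$ with $U(t_k)\to U^\infty$ and $U'(t_k)\to 0$, passes to the limit in $0=\lim\bigl[W(t_k)(1-U(t_k))-U(t_k)\bigr]$ using $1-U\ge 0$ and $\limsup_k W(t_k)\le W^\infty$, and solves for $U^\infty$ --- whereas you freeze $W$ at $W^\infty+\epsilon$ and run a comparison against the autonomous linear equation $u'=c-(1+c)u$. Both arguments rest on exactly the same two facts supplied by the positive invariance of $\Omega$, namely $1-U\ge 0$ (so the right-hand side is monotone in $W$, or so the limit inequality has the right sign) and $\sum_j B_j\le 1$ (so $W^\infty\le r_1$, giving the second inequality via the monotonicity of $x\mapsto x/(1+x)$). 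The fluctuation-lemma route is shorter on the page but outsources the work to a cited proposition; your comparison argument is fully self-contained at the cost of an explicit $\epsilon$-approximation and an integrating-factor step. Your closing remark correctly identifies the one genuine hypothesis being used ($1-U\ge 0$ on $\Omega$), though strictly speaking the invariance of $\Omega$ for the reduced system \eqref{LV} is most directly seen from $U'=W-(1+W)U<0$ at $U=1$ rather than from $T'=1-T$; this is a cosmetic point since the paper asserts the invariance of $\Omega$ outright.
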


\begin{proof}
Apply the fluctuation lemma, e.g. Prop. A.14 \cite{ST}, to \eqref{U} and use \eqref{r} and the invariance of $\Omega$ to conclude that $\sum_jB_j\le 1$.
\end{proof}

\begin{proposition}\label{limsup}
If $P_i(0)>0$, then $(\sum_{j\le i}B_j)_\infty\le s_i$.

If $B_1(0),P_n(0)>0$, then
\begin{equation}\label{wpersist}
\left (\sum_j (r_1s_j+1)P_j \right)^\infty\ge r_1-1-r_1s_n>0.
\end{equation}

\begin{enumerate}
\item [(a)] If $(\sum_{j\le i}B_j)^\infty< s_i$ then $P_i(t)\to 0$.\\
\item [(b)] If $i<j$, $P_i(0)>0$, and if $(B_{i+1}+B_{i+2}+\cdots +B_j)^\infty<s_j-s_i$, then $P_j(t)\to 0$.\\
\item [(c)] If $i<j$, $B_i(0)>0$, and if $(P_i+P_{i+1}+\cdots +P_{j-1})^\infty<\frac{r_i-r_j}{1+r_1}$, then $B_j(t)\to 0$.\\
\end{enumerate}
\end{proposition}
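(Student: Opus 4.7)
The plan is to exploit the logarithmic derivatives of the $P_i$ and $B_i$ equations, together with the boundedness of trajectories in $\Omega$ (which gives $P_i \le 1/s_i$ and $B_i \le 1$, hence $\ln P_i$ and $\ln B_i$ are bounded above).

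For the first assertion, write $(\ln P_i)' = s_i^{-1}\bigl(\sum_{j\le i} B_j - s_i\bigr)$. If $\bigl(\sum_{j\le i} B_j\bigr)_\infty > s_i$, then eventually the right side is bounded below by a positive constant, forcing $\ln P_i(t) \to \infty$ — contradicting the upper bound on $P_i$. For part (a), the same identity with the opposite inequality makes $(\ln P_i)'$ eventually bounded above by a negative constant, so $P_i \to 0$. For part (b), I would form the combination
\[
s_j(\ln P_j)' - s_i(\ln P_i)' = (B_{i+1}+\cdots+B_j) - (s_j-s_i);
\]
the hypothesis makes the right side eventually $\le -\epsilon$, so $s_j\ln P_j - s_i\ln P_i \to -\infty$, and since $P_i$ is bounded above, $\ln P_j\to -\infty$, giving $P_j\to 0$. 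For part (c), the analogous computation is
\[
(\ln B_j - \ln B_i)' = -(r_i - r_j)(1 - U) + (P_i + \cdots + P_{j-1}).
\]
Lemma~\ref{Ubound} gives $(1-U)_\infty \ge 1/(1+r_1)$, so under the hypothesis the right side is eventually $\le -\epsilon/2$, and since $B_i$ is bounded above, $B_j\to 0$.

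The main obstacle is the second statement. Write the $B_1$ equation in the form
\[
(\ln B_1)' = r_1 - 1 - r_1\sum_i B_i - M, \qquad M := \sum_j (r_1 s_j + 1)P_j,
\]
and suppose for contradiction that $M^\infty < r_1 - 1 - r_1 s_n$. Then for some $\epsilon>0$ and all large $t$,
\[
(\ln B_1)' \ge \epsilon - r_1\Bigl(\sum_i B_i - s_n\Bigr).
\]
Integrating and using $\ln B_1$ bounded above yields a lower bound $\int_T^t(\sum_i B_i - s_n)\,ds \ge \tfrac{\epsilon}{r_1}(t-T) - C$. The obstacle is that the first assertion only controls the liminf of $\sum_i B_i$, not its time average. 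The trick is to use $P_n$ (here we need $P_n(0)>0$): from $s_n(\ln P_n)' = \sum_j B_j - s_n$ and $\ln P_n$ bounded above,
\[
\int_T^t \Bigl(\sum_j B_j - s_n\Bigr)\,ds = s_n\bigl(\ln P_n(t) - \ln P_n(T)\bigr) \le C',
\]
a uniform upper bound that contradicts the linear growth lower bound. Finally, the positivity $r_1 - 1 - r_1 s_n > 0$ follows from \eqref{sss}, which gives $s_n < 1 - 1/r_n < 1 - 1/r_1$.
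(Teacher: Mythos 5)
Your proof is correct and follows essentially the same route as the paper: the same logarithmic-derivative identities for the first assertion and parts (a)--(c), and for \eqref{wpersist} the same pairing of the $B_1$ and $P_n$ equations to cancel $\sum_j B_j$ (the paper packages this as boundedness of $\log B_1P_n^{r_1s_n}$, whereas you integrate the two identities separately and compare the integrals, which is the same computation rearranged). No gaps.
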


\begin{proof}
The equation for $P_i$ implies that
$$
\frac{d}{dt}\log P_i^{s_i}(t)=\sum_{j\le i}B_j-s_i.
$$
If $(\sum_{j\le i}B_j)_\infty\le s_i$ is false, then $P_i(t)\to\infty$, a contradiction to boundedness of solutions.
Assertion (a) is transparent.

\eqref{sss} implies that $r_n(1-s_n)-1=r_n-1-r_ns_n>0$ and, with \eqref{s},\eqref{r} together, imply that $r_1-1-r_1s_n>0$. We have
$$
\frac{B_1'}{B_1}=r_1-1-r_1\sum_jB_j-\sum_j(r_1s_j+1)P_j
$$
and
$$
\frac{s_nP_n'}{P_n}=\sum_j B_j-s_n.
$$
Multiplying the second expression by $r_1$ and adding to the first gives
\begin{eqnarray*}
  \frac{d}{dt}\log B_1P_n^{r_1s_n} &=& \frac{B_1'}{B_1}+\frac{r_1s_nP_n'}{P_n} \\
   &=& r_1-1-r_1s_n-\sum_j (r_1s_j+1)P_j
  \end{eqnarray*}
\eqref{wpersist} follows since the alternative is that $B_1P_n$ is unbounded, a contradiction.

Proof of (b): if $i<j$, $P_i(0)>0, P_j(0)>0$, and if $(B_{i+1}+\cdots+B_j)^\infty<s_j-s_i$, then
\begin{eqnarray*}
  \frac{d}{dt}\log \frac{P_i^{s_i}(t)}{P_j^{s_j}(t)} &=& \frac{s_iP_i'}{P_i}-\frac{s_j P_j'}{P_j} \\
  &=& -(B_{i+1}+\cdots + B_j)+s_j-s_i \\
   &\ge& \epsilon, \ t\ge T
\end{eqnarray*}
for some $\epsilon, T >0$. Therefore, $\frac{P_i^{s_i}(t)}{P_j^{s_j}(t)}\to \infty$, which implies that $P_j(t)\to 0$ since
$P_i(t)$ is bounded.

proof of (c): assume that $B_i(0)>0, B_j(0)>0$, and  $(P_i+P_{i+1}+\cdots+P_{j-1})^\infty<\frac{r_i-r_j}{1+r_1}$. Then, recalling that $U=\sum_k(B_k+s_kP_k)$, we have
\begin{eqnarray*}
  \frac{d}{dt}\log \frac{B_i(t)}{B_j(t)} &=& \frac{B_i'}{B_i}-\frac{B_j'}{B_j} \\
  &=& (r_i-r_j)(1-U)-(P_i+P_{i+1}+\cdots +P_{j-1}) \\
   &\ge& (r_i-r_j)\left(\frac{1-\epsilon}{1+r_1}\right)-(P_i+P_{i+1}+\cdots +P_{j-1}), \ t\ge T
\end{eqnarray*}
where, by \eqref{Uinfty}, we can choose $\epsilon>0$ so small that $(P_i+P_{i+1}+\cdots +P_{j-1})^\infty<(r_i-r_j)(\frac{1-\epsilon}{1+r_1})$. It follows that
$B_i/B_j\to\infty$ which implies that $B_j(t)\to 0$.

\end{proof}

Note that case (b) and (c) of Proposition~\ref{limsup} extend Lemma~\ref{extinction}.

\begin{proposition}\label{weakpersist}
If $B_1(0)>0$, then $B_1^\infty\ge s_1$.

If $B_1(0)>0$ and $P_1(0)>0$, then
$$B_1^\infty\ge s_1, \ P_1^\infty\ge \min\{\frac{r_1-r_2}{1+r_1},  \frac{r_1-1-r_1s_1}{r_1s_1+1}\}.$$
\end{proposition}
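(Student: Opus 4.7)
The strategy for both assertions is the same: assume the conclusion fails, drive every host and virus population other than (respectively) $B_1$ or $(B_1,P_1)$ to zero by a cascade of extinctions, and then contradict boundedness in $\Omega$ using the asymptotic dynamics of the residual low-dimensional subsystem.

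For the first assertion I would suppose $B_1^\infty<s_1$, so there exists $M<s_1$ with $B_1(t)\le M$ for large $t$; by Remark~\ref{equilibriumrelation} also $M<1-1/r_1$. The cascade is driven by two identities. First, from $s_kP_k'/P_k=\sum_{j\le k}B_j-s_k$, once $B_2,\dots,B_k\to 0$ we have $s_kP_k'/P_k\le M+o(1)-s_k<0$ (using $M<s_1\le s_k$), so $P_k\to 0$. Second, from
\[
(\log B_1/B_{k+1})'=(r_1-r_{k+1})(1-U)-(P_1+\cdots+P_k)
\]
combined with $1-U\ge(1+r_1)^{-1}-\varepsilon$ from Lemma~\ref{Ubound}, once $P_1,\dots,P_k\to 0$ the right side is eventually positive, so $B_1/B_{k+1}\to\infty$ and $B_{k+1}\to 0$ since $B_1$ is bounded. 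Induction yields $B_j\to 0$ for $j\ge 2$ and $P_i\to 0$ for $i\ge 1$, whereupon the $B_1$ equation becomes $B_1'/B_1=r_1(1-B_1)-1+o(1)$, bounded below by $r_1(1-M)-1>0$ — contradicting $B_1\le M$ since $B_1$ then grows exponentially.

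For the second assertion, the first gives $B_1^\infty\ge s_1$, so only $P_1^\infty\ge\min\{a_1,b_1\}$ remains, where $a_1=(r_1-r_2)/(1+r_1)$ and $b_1=(r_1-1-r_1s_1)/(r_1s_1+1)$. Assuming $P_1^\infty<\min\{a_1,b_1\}$, I would run an analogous induction driven by parts (c) and (b) of Proposition~\ref{limsup}: once $P_2,\dots,P_k\to 0$, $(P_1+\cdots+P_k)^\infty=P_1^\infty<a_1\le(r_1-r_{k+1})/(1+r_1)$, so (c) yields $B_{k+1}\to 0$; then $(B_2+\cdots+B_{k+1})^\infty=0<s_{k+1}-s_1$, so (b) yields $P_{k+1}\to 0$. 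With only $B_1,P_1$ surviving asymptotically, the combination
\[
\tfrac{d}{dt}\log\bigl(B_1P_1^{r_1s_1}\bigr)=r_1-1-r_1s_1-(r_1s_1+1)P_1-r_1\!\!\sum_{j\ge 2}\!\!B_j-\!\!\sum_{j\ge 2}\!\!(r_1s_j+1)P_j
\]
reduces asymptotically to $r_1-1-r_1s_1-(r_1s_1+1)P_1+o(1)$; under $P_1^\infty<b_1$ this is eventually bounded below by a positive constant, so $B_1P_1^{r_1s_1}\to\infty$, contradicting boundedness in $\Omega$.

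The main obstacle I anticipate is orchestrating the cascade: each inductive step produces the desired extinction only when the $o(1)$ error from not-yet-vanished species is strictly dominated by the positive structural gap it faces — $s_k-M$ for $P_k$, $(r_1-r_{k+1})/(1+r_1)-P_1^\infty$ for $B_{k+1}$, and so on. These margins ultimately trace back to the strict inequality $s_1<1-1/r_1$ from Remark~\ref{equilibriumrelation} and to the strict assumption $P_1^\infty<\min\{a_1,b_1\}$, and the argument requires careful sequencing of the $B$- and $P$-extinctions along with a trivial treatment of the case where some $B_j(0)$ or $P_j(0)$ already vanishes.
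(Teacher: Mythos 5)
Your proposal is correct and follows essentially the same route as the paper: drive $P_1,B_2,P_2,B_3,\dots$ to zero by alternating applications of the ratio identities behind Proposition~\ref{limsup}(a)--(c), then contradict boundedness via the residual $B_1$ equation (first assertion) or via $\frac{d}{dt}\log\bigl(B_1P_1^{r_1s_1}\bigr)$ (second assertion). The only cosmetic difference is that you run the extinction cascade uniformly, whereas the paper splits into cases according to the smallest index $k$ with $P_k(0)>0$ and invokes Lemma~\ref{extinction} there; both handle the degenerate initial conditions correctly.
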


\begin{proof} Assume the conclusion is false. Then $P_1\to 0$ by Proposition~\ref{limsup} (a).
If $P_i(0)=0$ for all $i$, then $B_1(t)\to 1-1/r_1\ge s_1$ by the classical chemostat theory, e.g. Theorem 3.2 in \cite{SW}, so we suppose that $P_i(0)>0$  for some $i$.
Let $k$ denote the smallest such integer $i$ for which $P_i(0)>0$.

If $k=1$, then, as noted above, $P_1\to 0$ and so $B_2\to 0$ by Proposition~\ref{limsup} (c). Then $P_2\to 0$
by Proposition~\ref{limsup} (a) or (b).

If $k=2$, then $P_1\equiv 0$ so $B_2\to 0$ by Lemma~\ref{extinction} since $B_1$ and $B_2$ share the same virus.
Since $(B_1+B_2)^\infty=B_1^\infty<s_1<s_2$, it follows that $P_2\to 0$ by Proposition~\ref{limsup} (a). Now we can
use Proposition~\ref{limsup} (c) to show $B_3\to 0$ and then Proposition~\ref{limsup} (a) or (b) to show $P_3\to 0$.

If $k>2$, then $P_1\equiv P_2\equiv \cdots \equiv P_{k-1}\equiv 0$ and $P_k(0)>0$. As $B_1,\cdots, B_{k-1}$ share the same virus, then $B_i\equiv 0$ or $B_i\to 0$ for $1<i\le k-1$ by  Lemma~\ref{extinction}.  $B_k\to 0$ by Proposition~\ref{limsup} (c). Then,
$P_k\to 0$ by Proposition~\ref{limsup} (a) since $(\sum_{j\le k}B_j)^\infty=B_1^\infty<s_1<s_k$.
So $B_{k+1}\to 0$ by Proposition~\ref{limsup} (c). Proposition~\ref{limsup} (a) or (b) implies that
$P_{k+1}\to 0$.

We see that for all values of $k$, $B_2,\cdots, B_{k+1}\to 0$ and $P_1,\cdots, P_{k+1}\to 0$. Successive additional applications of
Proposition~\ref{limsup} (a) or (b) and (c) then imply that $B_2,\cdots, B_{n}\to 0$ and $P_1,\cdots, P_n\to 0$. But, then
$$
B_1'/(r_1B_1)\ge 1-\frac{1}{r_1}-\epsilon -B_1>s_1+\epsilon-B_1,\ t\ge T
$$
for some $\epsilon>0$ and $T>0$ (recall that $s_1<1-1/r_1$). This implies that $B_1^\infty>s_1$, a contradiction.
This completes the proof of the first assertion.

Now, suppose that $B_1(0)>0, P_1(0)>0$ and $P_1^\infty<\frac{r_1-r_2}{1+r_1}$. Proposition~\ref{limsup} (c) implies that
$B_2\to 0$. By  Proposition~\ref{limsup} (b), $P_2\to 0$. Applying Proposition~\ref{limsup} (c) with $i=1$ and $j=3$,
as $(P_1+P_2)^\infty=P_1^\infty<\frac{r_1-r_2}{1+r_1}<\frac{r_1-r_3}{1+r_1}$, we conclude that $B_3\to 0$. Then,
Proposition~\ref{limsup} (b) implies that $P_3\to 0$. Clearly, we can continue sequential application of Proposition~\ref{limsup} (b) and (c)
to conclude that $B_i,P_i\to 0$ for $i>1$. Now, we may argue as in the proof of \eqref{wpersist}
\begin{eqnarray*}
  \frac{d}{dt}\log B_1P_1^{r_1s_1} &=& \frac{B_1'}{B_1}+\frac{r_1s_1P_1'}{P_1} \\
   &=& r_1-1-r_1s_1-(r_1s_1+1)P_1-\hbox{terms that go to zero}
  \end{eqnarray*}
to conclude that $P_1^\infty\ge \frac{r_1-1-r_1s_1}{1+s_1r_1}$.

\end{proof}

The following is a slight modification of Theorem 5.2.3 in \cite{HS}.

\begin{lemma}\label{Hofbauer}
Let $x(t)$ be a bounded positive solution of the Lotka-Volterra system
$$
x_i'=x_i(r_i+\sum_{j=1}^n a_{ij}x_j ),\ 1\le i\le n
$$
and suppose there exists $k<n$ and $m,M,\delta>0$ such that $m\le x_i(t)\le M,\ 1\le i\le k,\ t>0$, $x_{k+1}(t)\le \delta,\ t>0$,
and $x_j(t)\to 0$ for $j>k+1$. Suppose also that the $k\times k$ subsystem obtained by setting $x_j=0, j>k$ has a unique positive
equilibrium $p=(p_1,p_2,\cdots,p_k)$. Then
$$
\liminf_{T\to\infty}\frac{1}{T}\int_0^T x_i(t)dt=p_i+O(\delta),\ 1\le i\le k.
$$
The same expression holds for the limit superior.
\end{lemma}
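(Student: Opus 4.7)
The plan is to apply the standard Lotka--Volterra time-average identity on $[0,T]$ to the $k$ persistent components and let $T\to\infty$. Setting $\bar x_j(T):=\frac{1}{T}\int_0^T x_j(t)\,dt$ and integrating $(\log x_i)' = r_i+\sum_j a_{ij}x_j$ over $[0,T]$, I obtain
\begin{equation*}
\frac{\log x_i(T)-\log x_i(0)}{T} \;=\; r_i+\sum_{j=1}^n a_{ij}\bar x_j(T),\qquad 1\le i\le k.
\end{equation*}
For $1\le i\le k$ the two-sided bound $m\le x_i(t)\le M$ forces the left-hand side to vanish as $T\to\infty$. On the right, indices $j>k+1$ drop out because $x_j(t)\to 0$ implies $\bar x_j(T)\to 0$; the index $j=k+1$ contributes a term whose magnitude is bounded by $|a_{i,k+1}|\delta$ since $0\le x_{k+1}(t)\le\delta$; and the terms $j\le k$ give bounded quantities.

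By compactness, along any sequence $T_n\to\infty$ I can pass to a subsequence on which $\bar x_j(T_n)\to y_j$ for every $1\le j\le k$ and $\bar x_{k+1}(T_n)\to c\in[0,\delta]$. The limit of the previous identity becomes the linear system
\begin{equation*}
\sum_{j=1}^k a_{ij} y_j \;=\; -r_i - a_{i,k+1}\,c,\qquad 1\le i\le k.
\end{equation*}
The hypothesis that the $k$-dimensional subsystem (obtained by setting $x_j=0$ for $j>k$) admits a \emph{unique} positive equilibrium $p$ is precisely the statement $Ap=-r$ with $A:=(a_{ij})_{1\le i,j\le k}$; uniqueness forces $A$ to be nonsingular. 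Inverting, $y = p - c\,A^{-1}a_{\cdot,k+1}$, so $|y_j-p_j|\le \tilde C\delta$ where $\tilde C$ depends only on $A$ and the column $(a_{i,k+1})_{1\le i\le k}$.

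Since every limit point of $\bar x_j(T)$ lies within $O(\delta)$ of $p_j$, both $\liminf_{T\to\infty}\bar x_j(T)$ and $\limsup_{T\to\infty}\bar x_j(T)$ coincide with $p_j+O(\delta)$, which is the asserted conclusion. The only delicate point in this strategy is that the Cesàro averages $\bar x_j(T)$ are not asserted to converge, so the argument must be phrased in terms of arbitrary convergent subsequences; the invertibility of $A$ (implicit in the uniqueness hypothesis on $p$) then supplies the rigidity that forces every subsequential limit to the same $p+O(\delta)$. Everything else is a routine integration-and-averaging computation.
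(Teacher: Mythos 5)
Your argument is correct and is essentially the paper's proof: both integrate $(\log x_i)'$ to get the time-average identity, kill the $j>k+1$ terms and bound the $j=k+1$ term by $\delta$, and then invert the $k\times k$ matrix $\tilde A=(a_{ij})_{1\le i,j\le k}$ (nonsingular precisely because the positive equilibrium $p$ is unique) to conclude $\bar x(T)=p+O(\delta)+O(1/T)$. The only cosmetic difference is that you pass to convergent subsequences before inverting, while the paper solves the linear system directly at each large $T$; the content is the same.
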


\begin{proof}
As in \cite{HS} Thm 5.2.3, we have that $z_j(T)=\frac{1}{T}\int_0^T x_j(t)dt$ satisfies
$$
\frac{\log x_i(T)-\log x_i(0)}{T}=\sum_{i=1}^k a_{ij}(z_j(T)-p_j)+a_{i(k+1)}z_{k+1}(T)+\sum_{j>k+1}a_{ij}z_j(T).
$$
for $i=1,2,\cdots,k$. As $T\to\infty$, the left hand side converges to zero and so does the final sum on the right since $x_j\to 0$ for $j>k+1$. The $k\times k$ matrix $\tilde A=(a_{ij})_{1\le i,j\le k}$ is invertible by hypothesis so we may write the above in vector form
as
$$
z(T)=p+(\tilde A)^{-1}(O(1/T)-z_{k+1}(T)A_{k+1})
$$
where $O(1/T)\to 0$ and $A_{k+1}$ is the first $k$ entries of the $k+1$ column of $A=(a_{ij})_{1\le i,j\le n}$. As $0\le z_{k+1}\le \delta$,
it follows that $\|z(T)-p\|\le c \delta$ for some $c$ and all large $T$. The result follows.
\end{proof}

Our main result follows.

\begin{theorem}\label{persist}
Let $1\le k\le n$.
\begin{enumerate}
  \item [(a)] There exists $\epsilon_k>0$ such that if $B_i(0)>0,\ 1\le i\le k$ and $P_j(0)>0,\ 1\le j\le k-1$, then
  $$
  B_{i,\infty}\ge \epsilon_k,\ 1\le i\le k \ \hbox{and}\ P_{j,\infty}\ge \epsilon_k,\ 1\le j\le k-1.
  $$
  \item [(b)] There exists $\epsilon_k>0$ such that if $B_i(0)>0, P_i(0)>0,\ 1\le i\le k$, then
  $$
  B_{i,\infty}\ge \epsilon_k,\ P_{i,\infty}\ge \epsilon_k,\ 1\le i\le k.
  $$
\end{enumerate}
\end{theorem}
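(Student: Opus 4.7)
The plan is to prove (a) and (b) simultaneously by induction on $k$, climbing the successional ladder
$(1,0)\to(1,1)\to(2,1)\to(2,2)\to\cdots\to(n,n-1)\to(n,n)$ in the order $(a_1),(b_1),(a_2),(b_2),\ldots,$
exactly as encoded in the nested infection tables of the introduction. The base case $(a_1)$ is the classical chemostat result: with all $P_j\equiv 0$, one has $B_1(t)\to 1-1/r_1>s_1$. The inductive steps all have the same shape: add one new population (either a bacterium $B_k$ to obtain $(a_k)$ from $(b_{k-1})$, or a virus $P_k$ to obtain $(b_k)$ from $(a_k)$), exploit an invasion inequality at the positive equilibrium of the predecessor subcommunity, and convert that inequality into a uniform lower bound using Lemma~\ref{Hofbauer}.

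At the heart of the induction are two invasion inequalities. For the step $(b_{k-1})\Rightarrow(a_k)$, a computation at $E^*_{k-1}$ using the equilibrium relation $N^*_{k-1}=1/(1+Q_{k-1})$ shows that the instantaneous growth rate of a rare $B_k$ (which supports no virus in the $(k,k-1)$-system) is
\[
\left.\frac{B_k'}{B_k}\right|_{E^*_{k-1}}=r_k\bigl(1-U^*_{k-1}\bigr)-1=\frac{r_k}{1+Q_{k-1}}-1>0,
\]
the positivity coming from $Q_{k-1}<Q_k$ together with the standing hypothesis $r_n/(1+Q_n)>1$ and Remark~\ref{Qn}. For the step $(a_k)\Rightarrow(b_k)$, the corresponding inequality is exactly Remark~\ref{unstable}, giving a positive Lyapunov exponent $1-(1+Q_k)/r_k$ for a rare $P_k$ at $E^\dag_k$.

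To promote each equilibrium-level inequality into a uniform floor $\epsilon_k$ I argue by contradiction. Suppose $(b_{k-1})$ holds but $(a_k)$ fails; then along some positive solutions one can extract intervals $[T_m,T_m+L_m]$ with $L_m\to\infty$ on which $B_k(t)\le\delta_m\to 0$. On such intervals the resident dynamics are an $O(\delta_m)$-perturbation of the $(k-1,k-1)$-system, which by $(b_{k-1})$ is permanent; Lemma~\ref{Hofbauer}, applied with the $2(k-1)$ residents in the role of $x_1,\ldots,x_k$ and $B_k$ in the role of $x_{k+1}$, then forces the time averages of the residents on $[T_m,T_m+L_m]$ to lie within $O(\delta_m)$ of their $E^*_{k-1}$-coordinates, and in particular $\bar U=U^*_{k-1}+O(\delta_m)$. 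Averaging $(\log B_k)'=r_k(1-U)-1$ over this interval yields
\[
\frac{\log B_k(T_m+L_m)-\log B_k(T_m)}{L_m}=\frac{r_k}{1+Q_{k-1}}-1+O(\delta_m)+o(1)>0
\]
for large $m$, contradicting $B_k\le\delta_m$ throughout. The step $(a_k)\Rightarrow(b_k)$ is identical, with $B_k$ replaced by $P_k$ and $E^*_{k-1}$ replaced by $E^\dag_k$.

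The main obstacle will be justifying that Lemma~\ref{Hofbauer} can be invoked with constants $m,M,\delta$ independent of the particular approximating trajectory: weak persistence on each near-boundary orbit alone is insufficient. I would supply this uniformity by a standard $\omega$-limit/limiting-equations argument: extract subsequential limits of the near-boundary segments, observe that any such limit trajectory lies in the invariant face $\{B_k=0\}$, apply the inductive hypothesis $(b_{k-1})$ there to obtain a lower bound $\epsilon_{k-1}$ valid on every such limit, and transfer the bound back to the approximating trajectories by continuous dependence on initial data over the compact time windows furnished by dissipativity (Lemma~\ref{Ubound}). Once this uniform ``resident floor'' is in hand, the averaged invasion inequality closes the contradiction and the induction proceeds from $k-1$ to $k$.
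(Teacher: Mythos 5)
Your overall architecture --- alternating induction through $(a_1),(b_1),(a_2),(b_2),\dots$, an invasion inequality at the predecessor subcommunity's equilibrium ($r_{k}/(1+Q_{k-1})-1>0$ at $E^*_{k-1}$ for a rare $B_k$, and $1-(1+Q_k)/r_k>0$ at $E^\dag_k$ for a rare $P_k$), promoted to a contradiction via the time-average Lemma~\ref{Hofbauer} --- is exactly the paper's proof. But there are three genuine gaps. First, your base case is not the statement $(a,1)$. That statement demands a uniform lower bound on $B_{1,\infty}$ for \emph{every} solution with $B_1(0)>0$, with no restriction on the other components; the classical virus-free chemostat result covers only the subcase $P_j\equiv 0$. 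The paper needs Proposition~\ref{weakpersist} here, whose proof is a nontrivial cascade: if $B_1^\infty<s_1$ then Proposition~\ref{limsup}(a)--(c) and Lemma~\ref{extinction} force every $B_j$, $P_j$ with $j\ge 2$ (and $P_1$) to zero in sequence, after which $B_1\to 1-1/r_1>s_1$ gives the contradiction. Second, the same issue recurs at every inductive step: the initial data may include positive $B_j$, $P_j$ for $j$ above the current level, and Lemma~\ref{Hofbauer} requires $x_j(t)\to 0$ for those indices. You never dispose of these populations. The paper does so by first noting that if the focal population's $\limsup$ is below the relevant threshold ($\frac{r_k-r_{k+1}}{1+r_1}$ or $s_{k+1}-s_k$), then Proposition~\ref{limsup}(b),(c) applied iteratively drive all higher-indexed $B_j,P_j$ to zero; only then is the averaging lemma applicable.

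Third, your contradiction scheme conflates weak and strong persistence. The conclusion of the theorem is a uniform $\liminf$ bound, but failure of that bound does not by itself produce windows $[T_m,T_m+L_m]$ with $L_m\to\infty$ on which $B_k\le\delta_m$: a solution can dip below $1/m$ infinitely often while recovering quickly each time. The paper avoids this by first invoking the general theorem that weak uniform persistence implies strong uniform persistence (Prop.~1.2 in \cite{T} or Corollary 4.8 in \cite{ST}), reducing the task to a uniform $\limsup$ bound $P_k^\infty\ge\delta$; the negation of \emph{that} statement yields, after a time translation, a solution with $P_k(t)\le\delta$ for all $t\ge 0$, so the averaging takes place on a half-line and no window extraction or limiting-equations argument is needed. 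This also dissolves the uniformity worry in your last paragraph: the constants $m,M$ in Lemma~\ref{Hofbauer} come directly from the uniform $\epsilon_k$ of the inductive hypothesis (after a further time translation) and the compactness of $\Omega$, not from an $\omega$-limit analysis. With these three repairs your argument becomes the paper's.
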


\begin{proof} We use the notation $[B_i]_t\equiv \frac{1}{t}\int_0^t B_i(s)ds$.
Our proof is by mathematical induction using the ordering of the $2n$ cases as follows
$$
(a,1)<(b,1)<(a,2)<(b,2)<\cdots<(a,n)<(b,n)
$$
where $(a,k)$ denotes case (a) with index $k$.

The cases $(a,1)$ and $(b,1)$ follow immediately from Proposition~\ref{weakpersist} and by the general result that weak uniform
persistence implies strong uniform persistence under suitable compactness assumptions. See Prop. 1.2 in \cite{T} or Corollary 4.8 in \cite{ST} with persistence function
$\rho=\min\{B_1,P_1\}$ in case $(b,1)$. Note that our state space is compact.

For the induction step, assuming that $(a,k)$ holds, we prove that $(b,k)$ holds and assuming that $(b,k)$ holds, we prove that $(a,k+1)$ holds.

We begin by assuming that $(a,k)$ holds and prove that $(b,k)$ holds. We consider solutions satisfying $B_i(0)>0, P_i(0)>0$ for $1\le i\le k$.
Note that other components $B_j(0)$ or $P_j(0)$ for $j>k$ may be positive or zero, we make not assumptions.
As $(a,k)$ holds, there exists $\epsilon_k>0$ such that $B_{i,\infty}\ge \epsilon_k,\ 1\le i\le k$ and $P_{i,\infty}\ge \epsilon_k,\ 1\le i\le k-1$. We need only show the
existence of $\delta>0$ such that $P_{k,\infty}\ge \delta$ for every solution with initial values as described above. In fact, by the above-mentioned result
that weak uniform persistence implies strong uniform persistence, it suffices to show that $P_k^\infty \ge \delta$.

If $P_k^\infty<\frac{r_k-r_{k+1}}{1+r_1}$, then $B_{k+1}\to 0$ by Proposition~\ref{limsup} (c). Then, by Proposition~\ref{limsup} (b), $P_{k+1}\to 0$.
Clearly,  we may sequentially apply Proposition~\ref{limsup} (b) and (c) to show that $B_j\to 0, P_j\to 0$ for $j\ge k+1$.

If there is no $\delta>0$ such that $P_k^\infty \ge \delta$ for every solution with initial data as described above, then for every
$\delta>0$, we may find a solution with such initial data such that $P_k^\infty < \delta$. By a translation of time, we may assume that
$P_k(t)\le \delta,\ t\ge 0$ for $0<\delta<\frac{r_k-r_{k+1}}{1+r_1}$ to be determined later. Then $B_j,P_j\to 0, j\ge k+1$.
Now, as $(a,k)$ holds, we may apply Lemma~\ref{Hofbauer}. The subsystem with $B_i=0, \ k+1\le
i\le n$ and $P_i=0,\ k\le i\le n$ has a unique positive equilibrium
by Proposition~\ref{positivequil}. See Remark~\ref{Qn}. The equation
$$
\frac{s_kP_k'}{P_k}=\sum_{j\le k}B_j-s_k
$$
implies that
$$
\frac{1}{t}\log \frac{P_k^{s_k}(t)}{P_k^{s_k}(0)}=\sum_{j\le k}[B_j]_t -s_k.
$$
By \eqref{equiln} and Lemma~\ref{LV}, we have for large $t$
$$
\sum_{j\le k}[B_j]_t=\sum_{j\le k}B_j^\dag+O(\delta)=s_k+q+O(\delta)
$$
where $q=(1-\frac{1+Q_k}{r_k})>0$. On choosing $\delta$ small enough and an appropriate solution, then $\sum_{j\le k}[B_j]_t -s_k>q/2$
for large $t$, implying that $P_k\to +\infty$, a contradiction. We have proved that $(a,k)$ implies $(b,k)$.

Now, we assume that $(b,k)$ holds and prove that $(a,k+1)$ holds. We consider solutions satisfying $B_i(0)>0, P_i(0)>0$ for $1\le i\le k$ and $B_{k+1}(0)>0$.
As $(b,k)$ holds by assumption, and following the same arguments as in the previous case, we only need to show that there exists $\delta>0$ such that $B_{k+1}^\infty\ge \delta$ for all solutions with initial data
as just described.

If $B_{k+1}^\infty<s_{k+1}-s_k$, then $P_{k+1}\to 0$ by Proposition~\ref{limsup} (b) and then $B_{k+2}\to 0$ by Proposition~\ref{limsup} (c).
This reasoning may be iterated to yield $B_i\to 0,\ k+2\le i\le n$ and $P_i\to 0,\ k+1\le i\le n$.

If there is no $\delta>0$ such that $B_{k+1}^\infty \ge \delta$ for every solution with initial data as described above, then for every
$\delta>0$, we may find a solution with such initial data such that $B_{k+1}^\infty < \delta$. By a translation of time, we may assume that
$B_{k+1}(t)\le \delta,\ t\ge 0$ for $0<\delta<s_{k+1}-s_k$ to be determined later. Then $B_j,P_j\to 0, j\ge k+2$ and $P_{k+1}\to 0$.
Now, using that $(b,k)$ holds, we apply Lemma~\ref{Hofbauer}. The subsystem with $B_i=0, P_i=0 \ k+1\le
i\le n$ has a unique positive equilibrium
by Proposition~\ref{positivequil}. See Remark~\ref{Qn}.
The equation for $B_{k+1}$ is
$$
\frac{B_{k+1}'}{B_{k+1}}=r_{k+1}-1-r_{k+1}\sum_{j=1}^k(B_j+s_jP_j)-r_{k+1}\sum_{j=k+1}^n(B_j+s_jP_j)-\sum_{j=k+1}^nP_j
$$
Integrating, we have
$$
\frac{1}{t}\log \frac{B_{k+1}(t)}{B_{k+1}(0)}=r_{k+1}-1-r_{k+1}\sum_{j=1}^k([B_j]_t+s_j[P_j]_t)-r_{k+1}[B_{k+1}]_t+O(1/t)
$$
By \eqref{equil} and Lemma~\ref{Hofbauer}, we have that for all large $t$
$$
\sum_{j=1}^k([B_j]_t+s_j[P_j]_t)= \sum_{j=1}^k(B_j^*+s_j P_j^*)+O(\delta)=\frac{Q_k}{1+Q_k}+O(\delta).
$$
Since $B_{k+1}(t)\le\delta$, $[B_{k+1}]_t=O(\delta)$. Hence, for large $t$
$$
\frac{1}{t}\log \frac{B_{k+1}(t)}{B_{k+1}(0)}=\frac{r_{k+1}}{1+Q_k}-1+O(\delta)+O(1/t).
$$
Now, $\frac{r_{k+1}}{1+Q_k}>\frac{r_{k+1}}{1+Q_{k+1}}>1$ so by choosing $\delta$ sufficiently small and an appropriate solution, we can ensure that
the right hand side is bounded below by a positive constant for all large $t$, implying that $B_{k+1}(t)$ is unbounded.
This contradiction completes our proof that $(b,k)$ implies $(a,k+1)$. Thus, our proof is complete by mathematical induction.
\end{proof}

\begin{corollary}\label{meanvalue}
For every solution of \eqref{LV} starting with all components positive, we have that
\begin{equation}
\frac{1}{t}\int_0^t B_i(s)ds \to B_i^*, \ \frac{1}{t}\int_0^t P_i(s)ds\to P_i^*
\end{equation}
where $B_i^*, P_i^*$ are as in \eqref{equil}.

For every solution of \eqref{LV} starting with all components positive except $P_n(0)=0$, we have that
\begin{equation}
\frac{1}{t}\int_0^t B_i(s)ds \to B_i^\dag, \ \frac{1}{t}\int_0^t P_i(s)ds\to P_i^\dag
\end{equation}
where $B_i^\dag, P_i^\dag$ are as in \eqref{equiln}.
\end{corollary}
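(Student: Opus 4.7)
The plan is to derive a closed linear system for the time averages by integrating the logarithmic derivatives of each component, and then appeal to the uniqueness part of Proposition~\ref{positivequil} to identify the limit. The key input is permanence: by Theorem~\ref{persist}(b) with $k=n$, any solution with all initial components positive satisfies $m \le B_i(t), P_i(t) \le M$ for some positive constants $m, M$ and all sufficiently large $t$, so $\frac{1}{t}\log B_i(t) \to 0$ and $\frac{1}{t}\log P_i(t) \to 0$ as $t\to\infty$.

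Using the notation $[f]_t := \frac{1}{t}\int_0^t f(s)\,ds$ from the proof of Theorem~\ref{persist}, dividing the $B_i$ and $P_i$ equations of \eqref{LV} by $B_i$ and $P_i$ respectively, integrating on $[0,t]$, and dividing by $t$ yields
\begin{align*}
\frac{1}{t}\log\frac{B_i(t)}{B_i(0)} &= r_i - 1 - r_i\sum_{j=1}^n\bigl([B_j]_t + s_j[P_j]_t\bigr) - \sum_{j\ge i}[P_j]_t,\\
\frac{s_i}{t}\log\frac{P_i(t)}{P_i(0)} &= \sum_{j\le i}[B_j]_t - s_i,
\end{align*}
for $1\le i\le n$. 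The left-hand sides tend to zero. Since $([B_j]_t,[P_j]_t)$ remains in the compact set $\Omega$, every sequence $t_k \to \infty$ has a convergent subsequence; the limit $(\bar B_j,\bar P_j)$ satisfies the equilibrium equations obtained by setting the right-hand sides above equal to zero, and moreover $\bar B_i,\bar P_i\ge m>0$. Thus $(\bar B_j,\bar P_j)$ is a positive equilibrium of \eqref{LV}, and by Proposition~\ref{positivequil} it must coincide with $E^*$. Since every convergent subsequence of the time-averaged vector shares the same limit, the averages themselves converge, giving $[B_j]_t\to B_j^*$ and $[P_j]_t\to P_j^*$.

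For the second statement, $P_n(0)=0$ forces $P_n\equiv 0$, and Theorem~\ref{persist}(a) with $k=n$ supplies uniform positive lower and upper bounds for the remaining $2n-1$ components $B_1,\dots,B_n,P_1,\dots,P_{n-1}$. The same averaging argument applied to this subsystem produces subsequential limits that satisfy the equilibrium equations with $P_n = 0$; by the $E^\dag$ portion of Proposition~\ref{positivequil} this equilibrium is unique, so the time averages converge to $(B_j^\dag, P_j^\dag)$. The only genuine obstacle is securing the uniform positive lower bound required to kill $\frac{1}{t}\log B_i(t)$ and $\frac{1}{t}\log P_i(t)$; this is supplied in each case by Theorem~\ref{persist}, and the remainder is a standard Lotka-Volterra time-averaging argument combined with uniqueness of the positive equilibrium.
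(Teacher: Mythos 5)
Your proposal is correct and takes essentially the same route as the paper: the paper's one-line proof invokes Theorem~\ref{persist} for the uniform bounds and then cites Theorem 5.2.3 of Hofbauer--Sigmund, whose content is exactly the log-derivative time-averaging argument you spell out (integrate $B_i'/B_i$ and $s_iP_i'/P_i$, kill the left-hand sides using permanence plus boundedness, pass to subsequential limits of the averages, and identify the limit via uniqueness of $E^*$, resp.\ $E^\dag$, from Proposition~\ref{positivequil}). The only difference is that you make the cited theorem self-contained rather than quoting it.
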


\begin{proof}
This follows from the previous theorem together with Theorem 5.2.3 in \cite{HS}.
\end{proof}

\end{document}